\renewcommand{\fnum@figure}{Fig. \thefigure}
\newcolumntype{P}[1]{>{\centering\arraybackslash}p{#1}}
\newtheorem{theorem}{Theorem}
\newtheorem{lemma}{Lemma}
\newtheorem{corollary}{Corollary}
\newtheorem{proposition}{Proposition}
\newtheorem{conjecture}{Conjecture}
\newtheorem{sketch}{Sketch of Proof}
\def\BibTeX{{\rm B\kern-.05em{\sc i\kern-.025em b}\kern-.08em
		T\kern-.1667em\lower.7ex\hbox{E}\kern-.125emX}}
\begin{document}
	
	\title{Performance Analysis of Relay Selection Schemes in Multi-Hop Decode-and-Forward Networks
	}
	
	\author{
		Shalanika Dayarathna,~\IEEEmembership{Member,~IEEE,} Rajitha Senanayake,~\IEEEmembership{Member,~IEEE,} \\ and Jamie Evans,~\IEEEmembership{Senior Member,~IEEE}
	}
	\maketitle
	
	\begin{abstract}		
	This paper analyses the data rate achieved by various relay selection schemes in a single-user multi-hop relay network with decode-and-forward (DF) relaying. While the single-user relay selection problem is well studied in the literature, research on achievable rate maximization is limited to dual-hop networks and multi-hop networks with a single relay per hop. We fill this important gap by focusing on achievable rate maximization in multi-hop, multi-relay networks. First, we consider optimal relay selection and obtain two approximations to the achievable rate. Next, we consider three existing sub-optimal relay selection strategies namely hop-by-hop, ad-hoc and block-by-block relay selection and obtain exact expressions for the achievable rate under each of these strategies. We also extend the sliding window based relay selection to the DF relay network and derive an approximation to the achievable rate. Further, we investigate the impact of window size in sliding window based relay selection and show that a window size of three is sufficient to achieve most of the possible performance gains. Finally, we extend this analysis to a noise limited multi-user network where the number of available relay nodes is large compared to the number of users and derive approximations to the achievable sum-rate.
	\end{abstract}
	
	\begin{IEEEkeywords}
		Achievable rate, ad-hoc, block-by-block, decode-and-forward, hop-by-hop, multi-hop, optimal relay selection, single-user, sliding window.	
	\end{IEEEkeywords}
	
	\section{Introduction} \label{section-intro}
	With current 5G and upcoming 6G technologies, the number of connected devices is ever increasing. As a result, future wireless networks will need to enhance the spectral efficiency and cooperative communication has been introduced as a key strategy to achieve spatial diversity and to combat fading \cite{00047}. In cooperative networks, multiple nodes known as relay nodes are placed across the coverage area to support the communication between the transmitter and the intended receiver. For such networks, that typically arise in sensor mesh and ad hoc communications, cellular communication and vehicular communication, a proper choice of relays can improve the performance while mitigating the overhead and complexity. Therefore, the relay selection in multi-hop cooperative relay networks has resulted in a significant research interest \cite{my_Thesis,2970744,s18103263,3177187}.
	
	In the relay network literature, the main focus of the single-user, multi-hop relay selection problem has been on the objective of maximizing the minimum signal-to-noise-ratio (SNR) or minimizing the outage probability. In \cite{2008.107}, the authors propose a simple hop-by-hop relay selection strategy for a single-user, multi-hop decode-and-forward (DF) relay network where the per hop relay selection is optimized based on the local hop gains. This was later extended in \cite{071030}, to achieve the full-diversity where the authors proposed an ad-hoc relay selection strategy by joining the final two hops together. As the performance loss of ad-hoc routing increases with the number of hops, they also propose a block-by-block relay selection strategy by considering a block of non-overlapping hops at a time to improve the performance at the expense of complexity. Taking a different approach, in \cite{5982498}, a new framework for optimal route selection in single-user, multi-hop DF relay networks is proposed by mapping the relay network into a trellis diagram. It was shown that the dynamic programming based Viterbi algorithm can be used to find the optimal relay assignment by finding the most likely sequence of states in the trellis. However, an direct extension of the dynamic programming is not feasible for amplify-and-forward (AF) relay networks. As such, the Viterbi algorithm based near optimal routing scheme is proposed in \cite{6364160}, where the effective SNR for a specific relay path is approximated by the minimum value of SNRs across all the hops, which results in an approximation that is tight at high SNR. Following a similar approach, a dynamic programming based algorithm for route selection is proposed in \cite{130145}, where the authors use a recursive computation of the outage probability. In \cite{152973}, the authors derive analytical expressions for outage probability under Nakagami fading with consideration to diversity order for a DF multi-hop relay network with single relay node in each hop. A similar analysis is conducted in \cite{en11113004}, where the authors study a multi-hop parallel relay network under Nakagami fading. As a result, the average end-to-end rate throughput is calculated either using the required per-hop-rate, the number of hops and the end-to-end outage probability \cite{2969232} or using the allocated bandwidth and the spectral efficiency \cite{0286-3}. 
	
	With the increasing demand for high data rates, the throughput maximization has gained more attention in the recent literature \cite{2684125}. In terms of throughput maximization, most research has focused on the simple dual-hop relay networks where both hops can be easily combined together to compute the end-to-end achievable rate \cite{122813,2360174,6162468,9771581}. In \cite{122813}, Meijer's function is used to derive an analytical expression for the achievable rate of a dual-hop relay network with a single relay node under the Rician fading distribution. This was later extended to a multiple antenna dual-hop relay network in \cite{2360174} and to a relay network with multiple relay nodes in \cite{6162468}. In \cite{6162468}, the authors use Marcum Q-function to analyze the ergodic capacity of a single-user, dual hop relay network with multiple relay nodes under the Rician fading environment. Taking a different approach, in \cite{071098,6811374,7248654}, the authors analyze the capacity of a single-user, dual-hop relay network with multiple relay nodes under the Rayleigh fading distribution. In a multi-hop relay network, the relay selection in each hop determines the available relay choices in the next hop. This introduces an added complexity for a multi-hop network compared to a dual-hop network. The capacity of a multi-hop relay network with a single relay node in each hop is considered in \cite{874520,262121} where the authors analyze the capacity in terms of the energy efficiency and the spectral efficiency. In \cite{6825593}, the authors obtain an analytical expression for the capacity of a single-user, multi-hop DF relay network with a single relay in each hop under the Rician fading distribution. Similar analysis has been performed in \cite{080818} for a DF relay network under the Rayleigh fading channel and in \cite{080118} for an AF relay network under the adaptive transmission with Nakagami-m fading distribution. However, with single relay node in each hop, there would be no relay selection involved, thus making the rate analysis in these work significantly different to the analysis considered in this paper. As a result, even though, both maximizing the minimum SNR and maximizing the overall achievable rate provide same relay assignment for the special case of single user network, the analytical expressions with respect to the achievable rate have not been analyzed for a multi-hop DF relay network consisting of multiple relay nodes in each hop. To the best of our knowledge, there exists no analytical expressions to find the achievable rate in a single-user, multi-hop, multi-relay DF relay network. We make progress towards addressing this challenge by analyzing the achievable rate of a multi-hop DF relay network with multiple relay nodes under the Rayleigh fading distribution. More recent work in the field of relay selection in single-user, multi-hop relay networks focuses on machine learning and deep reinforcement learning techniques \cite{electronics8090949,e23101310,2941932}. However, these work focus on providing more practical relay selection solutions while in this paper, we focus on theoretical analysis of the achievable rate.
	
	In this paper, we consider a single-user, multi-hop DF relay network where each hop consists of multiple relays and analyze the relay selection problem that aims to maximize the achievable rate throughput. In the following, we list the contributions of this paper. 
	\begin{itemize}
		\item By considering the optimal relay selection strategy, we derive two approximations to the optimal achievable rate as presented in \eqref{optimal_approx1} and \eqref{optimal_approx2}. Further, we show that the second approximation, which is derived based on dynamic programming, is more accurate than the simple approximation that can be derived using the independent path model. 
		\item By considering three sub-optimal relay selection strategies namely, hop-by-hop relay selection, ad-hoc relay selection and block-by-block relay selection, we derive the exact expressions for the achievable rate. This contribution is presented in Theorem~\ref{theorem_hop}, Theorem~\ref{theorem_adhoc} and Theorem \ref{theorem_block}, respectively. 
		\item Next, we consider sliding window based relay selection strategy used in AF relay networks. Extending this to the DF relay networks, we derive an approximation to the achievable rate under sliding window based relay selection. This contribution is presented in \eqref{eq_sliding_rate}. Based on simulations, we further show that under this strategy, the window size of three is sufficient to achieve most of the possible performance gains compared to optimal relay selection.
		\item Finally, we extend this rate analysis to a noise limited multi-user network where there is a larger number of relay nodes per hop compared to the the number of users. For such network, we show that the achievable sum-rate can be approximated by the single user rate times the number of users. This contribution is presented in Section \ref{section-multi}.	
	\end{itemize} 	
	The structure of the paper is along the following lines. System model followed by the formulation of the optimization problem is provided in Section \ref{section-model} for a single-user, multi-hop DF relay network with multiple relay nodes in each hop. Next, the optimal relay selection strategy and four sub-optimal relay selection strategies are analyzed in terms of the achievable rate in Section \ref{section-single} with extension to multi-user network is given in Section \ref{section-multi}. Finally, the conclusions and potential future work is outlined in Section \ref{section-conclusion}.
	
	\section{System Model and Optimization Problem Formulation}\label{section-model}	
	\begin{figure}
		\centerline{\includegraphics[width=0.75\textwidth]{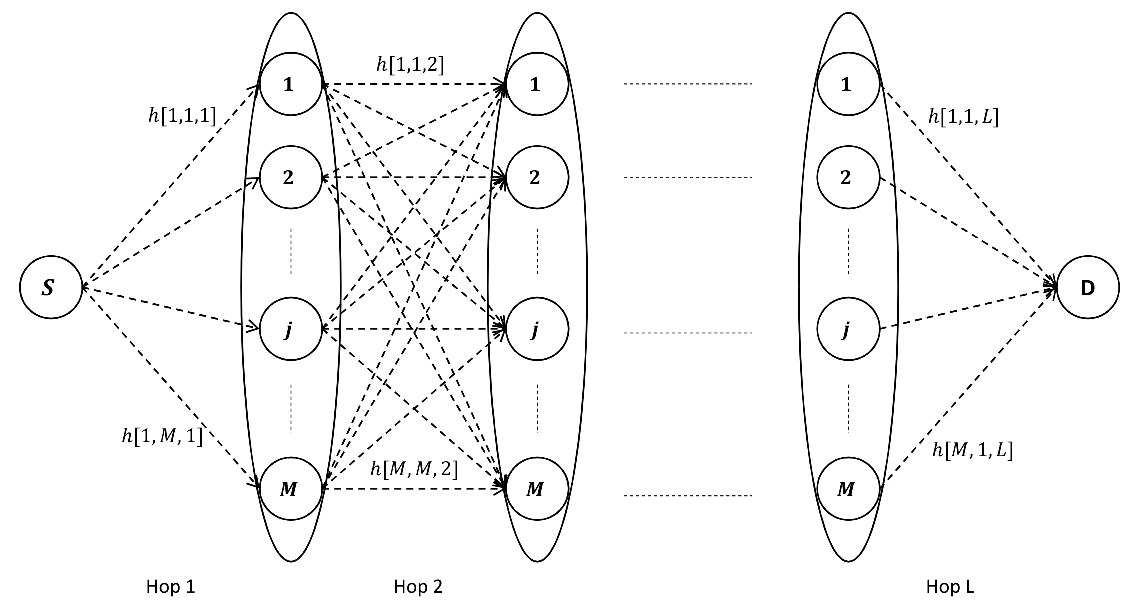}}
		\caption{A single-user, multi-hop relay network.}
		\label{figure1}
	\end{figure}	
	We consider a single-user relay network as illustrated in Fig. \ref{figure1}, where the source node $S$ sends information to the destination node $D$. The communication between source and destination is facilitated via a wireless relay network consists of $L$ hops where each hop contains $M$ DF relay nodes. Please note that the more general case with different number of relay nodes in each hop can be handled by considering dummy relay nodes in each hop \cite{axiv_paper1}. As such, the assumption of equal number of relay nodes in each hop is made without loss of generality. With the intention of minimizing the required conditions for synchronization among relay nodes, we consider that the communication in each hop is facilitated by only a single relay node and denote the relay selected at hop $l$ as $r(l)$. We also assume that each relay node operates in half-duplex mode with maximum transmit power $P$ for transmission and that transmissions are scheduled such that cross-hop interference can be neglected \cite{2008.107,071030,5982498,6364160,6825593}. 
	As commonly used in the literature, we consider all links to be independent but non-identically distributed. In this work, we limit our analysis to Rayleigh fading distribution as commonly considered in the literature \cite{071098,6811374,7248654,080818}. As such, the Rayleigh fading channel coefficient of the link in hop $l$ from transmitting node $i$ to receiving node $j$ can be modeled as a random variable denoted by $h[i,j,l]$ with mean zero and variance $\sigma_1^2$, where $\sigma_1^2$ results from the large scale fading \cite{7636773}.  
	
	When node $i$ is transmitting in hop $l$, we can write the received signal at node $j$ as,
	\begin{align}
		y[j,l]=\sqrt{P}\,h[i,j,l]\,s[i,l-1]+n[j,l]
	\end{align}
	where $s[i,l-1]$ and $n[j,l]$ denote the transmitted data symbol by the transmitting node $i$ and the additive white Gaussian noise (AWGN) at the receiving node $j$ in hop $l$, respectively. The average energy of the data symbols are normalized to one and the AWGN distribution is with mean zero and variance $\sigma^2$. Noting that the corresponding end-to-end received SNR is determined by the minimum SNR value across all $L$ hops, we write this as,
	\begin{align}\label{eq_1}
		\gamma^{min} = \dfrac{P}{\sigma^2}\; {\underset{l \in \{1,...,L\}} {\textrm{min} }}\; \biggl\{|h[r(l-1),r(l),l]|^2\biggr\},
	\end{align}
	where $r(0)\!=\! r(L)\!=\!1$ since the signal is transmitted from $S$ in hop $0$ and received at $D$ in hop $L$. Therefore, we write the achievable rate maximization problem based on relay selection as,
	\begin{align}\label{eq_2}
		& {\underset{r(1),...,r(L-1)} {\textrm{max} }}\; \log_2(1+ \gamma^{min}) \nonumber\\
		&{\rm{s.t \;\; \;}}
		r(l) \in \{1,2,...,M\} ~~ \forall l, 
	\end{align}
	where $l \in \{1,2,..,L-1\}$. Due to the availability of multiple hops and the dependency of relay selection in adjacent hops, the challenge here is selecting the best relay combination such that the end-to-end achievable rate is maximized. We note that due to the monotonically increasing nature of the logarithmic function, the objective function in \eqref{eq_2} can be maximized by maximizing $\gamma^{min}$. As such, we consider a number of relay selection strategies used in the literature to maximize the end-to-end received SNR with the objective of minimizing the outage probability.
	
	\section{Relay Selection Strategies}\label{section-single}
	In this section, we consider the optimal relay selection and four sub-optimal relay selection strategies namely, hop-by-hop relay selection \cite{2008.107}, ad-hoc relay selection \cite{071030}, block-by-block relay selection \cite{071030} and sliding window based relay selection \cite{6364160} and derive novel analytical expressions for the achievable rate under each of these relay selection strategies. Please note that the work in \cite{2008.107,071030} and \cite{6364160} is limited to outage probability analysis. As such, the similarity between those work with current analysis is limited to the definitions of the sub-optimal relay selection strategies and the analytical expressions derived in this section is significantly different to existing work.
	
	\subsection{Optimal Relay Selection}
	In a relay network consisting of $L$ hops and $M$ relays at each hop, there are $M^{L-1}$ possible paths where each path can be viewed as a sequence of $L-1$ integers, corresponding to the relay indices, that take value between $1$ and $M$. Therefore, a path represents a sequence of relay nodes that facilitates the communication between source and destination nodes over $L$ hops. As such, optimal relay selection would involve selecting a path such that the end-to-end received SNR is maximized. When two paths retain the relay nodes in $l$ hops while allowing the relay nodes in remaining $L-l$ hops to be different, those two paths undergo same channels in $l$ hops. This makes them not independent to each other. Therefore, all the paths are not independent and as a result, obtaining an exact expression for the achievable rate is challenging. Thus, we take a more tractable approach by considering an approximate relay network with independent paths.
	
	\subsubsection*{Independent Path Model}
	In the following, we consider an approximated model based on the assumption that all $M^{L-1}$ paths are independent. Next, we derive an exact expression for the optimal achievable rate under this independent path model as presented in Lemma \ref{Lemma1}.
	\begin{lemma}\label{Lemma1}
		Consider a single-user, multi-hop relay network with $L$ hops and $M$ relay nodes per hop. Under the independent path assumption described above, the optimal achievable rate can be expressed as,
		\begin{align*}
			R^{\textrm{opt}}_{\textrm{ind}} = \dfrac{1}{\log 2}\sum_{q=1}^{Q} \binom{Q}{q}(-1)^{q} \; e^{Lq/\sigma_a^2}\; Ei\biggl(-\dfrac{Lq}{\sigma_a^2}\biggr),
		\end{align*}	
		where $\sigma_a^2=\frac{P\sigma_1^2}{\sigma^2}, Q=M^{L-1}$, $\binom{Q}{q}$ is the binomial coefficient and $Ei(.)$ denotes the exponential integral function \cite{2014520}.  
	\end{lemma}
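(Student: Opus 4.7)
\textbf{Proof plan for Lemma \ref{Lemma1}.} The plan is to reduce the optimal rate, under the independent path assumption, to the expected logarithm of the maximum of $Q=M^{L-1}$ independent copies of a per-path SNR random variable, then to expand the resulting CDF by the binomial theorem and evaluate term-by-term via a standard exponential integral identity.

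First, I would identify the per-path SNR distribution. Since $h[i,j,l]$ is zero-mean complex Gaussian with variance $\sigma_1^2$, the squared magnitude $|h[i,j,l]|^2$ is exponential with mean $\sigma_1^2$, so $(P/\sigma^2)|h[i,j,l]|^2$ is exponential with mean $\sigma_a^2=P\sigma_1^2/\sigma^2$. Across the $L$ hops of a single path, these are i.i.d., so by the well-known minimum-of-exponentials rule the per-path end-to-end SNR (as defined in \eqref{eq_1}) is exponential with mean $\sigma_a^2/L$, i.e., with CDF $F_1(x)=1-e^{-Lx/\sigma_a^2}$ for $x\ge 0$.

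Next, I would invoke the independent path assumption to treat the $Q=M^{L-1}$ path SNRs as i.i.d., so the optimal path SNR $\gamma^{\mathrm{opt}}_{\mathrm{ind}}$ is their maximum, with CDF $F(x)=(1-e^{-Lx/\sigma_a^2})^Q$. The achievable rate is $R^{\mathrm{opt}}_{\mathrm{ind}}=\mathbb{E}[\log_2(1+\gamma^{\mathrm{opt}}_{\mathrm{ind}})]$. I would compute this through the tail identity
\begin{align*}
\mathbb{E}[\log(1+X)]=\int_0^\infty \frac{1-F(x)}{1+x}\,dx,
\end{align*}
and expand $1-F(x)$ by the binomial theorem:
\begin{align*}
1-(1-e^{-Lx/\sigma_a^2})^Q=-\sum_{q=1}^{Q}\binom{Q}{q}(-1)^q e^{-Lqx/\sigma_a^2}.
\end{align*}
Substituting and exchanging sum and integral reduces the problem to evaluating $\int_0^\infty \frac{e^{-ax}}{1+x}\,dx$ with $a=Lq/\sigma_a^2$, which after the change of variables $u=a(1+x)$ becomes $e^a\int_a^\infty \frac{e^{-u}}{u}\,du = -e^{a}\,Ei(-a)$. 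Collecting the sign, dividing by $\log 2$, and summing over $q=1,\dots,Q$ yields the claimed closed form.

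The main obstacle, and really the only non-routine step, is ensuring the integral--sum interchange is legitimate and that the apparently divergent $q=0$ term does not appear: this is handled cleanly by writing the expectation in the $1-F(x)$ form rather than integrating $\log(1+x)f(x)$ directly, since that automatically removes the $q=0$ contribution and keeps every remaining integral absolutely convergent. Beyond that, identifying $\int_0^\infty e^{-ax}/(1+x)\,dx=-e^{a}Ei(-a)$ is standard \cite{2014520}, so the derivation amounts to assembling these pieces and matching notation with the stated formula.
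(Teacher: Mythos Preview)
Your proposal is correct and follows essentially the same route as the paper: identify the per-path SNR as exponential with rate $L/\sigma_a^2$, take the maximum over $Q=M^{L-1}$ independent paths, expand the resulting CDF binomially, and reduce each term to $-e^{a}Ei(-a)$. The only cosmetic difference is that the paper differentiates the CDF to obtain the PDF and then integrates $\log_2(1+x)f_{\gamma^{\min}}^{\mathrm{opt}}(x)$ via the substitution $t=1+x$ and \cite[eq.~(4.331.2)]{2014520}, whereas you use the equivalent tail formula $\mathbb{E}[\log(1+X)]=\int_0^\infty (1-F(x))/(1+x)\,dx$; both are one integration by parts apart and lead to the identical sum.
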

	\begin{proof}
		Please refer to Appendix \ref{app7:2}.
	\end{proof}
	\noindent As such, we can approximate the optimal achievable rate using Lemma \ref{Lemma1} as,
	\begin{align}\label{optimal_approx1}
		R^{\textrm{opt}} \approx \dfrac{1}{\log 2}\sum_{q=1}^{Q} \binom{Q}{q}(-1)^{q} \; e^{Lq/\sigma_a^2}\; Ei\biggl(-\dfrac{Lq}{\sigma_a^2}\biggr).
	\end{align}
	In the following example, we analyze the accuracy of the approximation given in \eqref{optimal_approx1}.
	
	\vspace{0.25cm}
	\noindent
	\textbf{Example 1}: Consider a single-user, multi-hop relay network where $M=2, L=3,6$. Assuming constant distance in all hops resulted from the uniform placement of nodes \cite{2969232}, we normalize the variance of the channel fading coefficient, $\sigma_1^2$, to unity. For such a network, the analytical and the simulated achievable rate against the received SNR are plotted in Fig. \ref{figure2} under the optimal relay selection strategy. From the figure, we observe that the approximate analytical rate achieved with \eqref{optimal_approx1} is accurate at small $L$. However, with increasing $L$, the approximation given in \eqref{optimal_approx1} becomes loose. In addition, we can observe that the achievable rate decreases with increasing $L$. Since, it is assumed that the large scale fading is constant for each link, the fading of the end-to-end received SNR increases with $L$, thereby decreasing the achievable rate. 
	\begin{figure}
		\centerline{\includegraphics[width=0.7\textwidth]{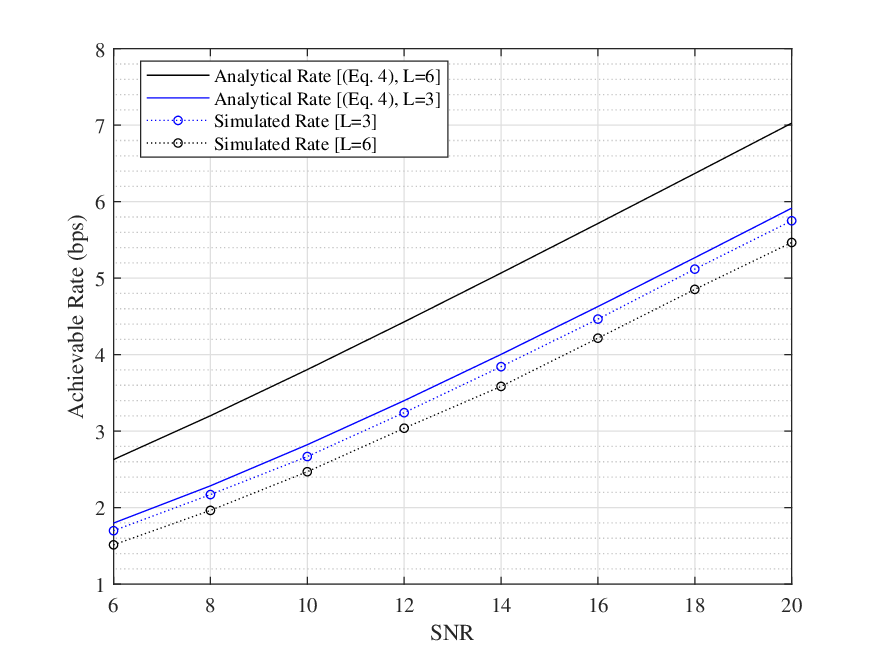}}
		\caption{The achievable rate against the received SNR under optimal relay assignment}
		\label{figure2}
	\end{figure}	
	
	Out of $M^{L-1}$ total paths there can be only $M$ paths that are independent in all $L$ hops and $M^{L-l-1}$ paths that are independent in at most $l$ hops with $l<L-2$. As such, number of dependent paths increases with $L$. This reduces the feasibility of independent paths assumption, thus making the approximation given in \eqref{optimal_approx1} not accurate with larger $L$ as illustrated in Example~1. Therefore, taking a different approach, we derive a much tighter approximation on the optimal achievable rate considering the dynamic programming based implementation proposed in \cite{5982498}. We note that the optimal relay path can also be obtained following a dynamic programming approach where a central controller can be used to select the path that has the largest bottleneck-link SNR with a linear complexity of $L$. By modeling the multi-hop relay network as a trellis diagram, the optimal relay path can be obtained via the one that results in the maximum value of the minimum branch weight across the stages. As a result, at state $i$ in stage $l$ we can write the Bellman equation relating to the cost function as,
	\begin{align}\label{eq_dynamic}
		V(i,l) = {\underset{j \in \{1,...,M\}} {\textrm{max}}}\biggl(\textrm{min}\biggl\{\dfrac{P}{\sigma^2} |h[j,i,l]|^2, V(j,l-1)\biggr\}\biggr),
	\end{align}
	where each stage is mapped to each hop in the network and each state is mapped to each relay node in a given hop. For further details on how to apply dynamic programming please refer to \cite{5982498}. We note that the initialization value in the first hop, $V(i,1)$, takes the value of the received SNR of the source-relay link for a given relay node $i$. For any other hop, $V(i,l)$ can be computed using \eqref{eq_dynamic}, and $V(1,L)$ represents the optimal end-to-end received SNR. As such, each $V(i,l)$ depends on the minimum SNR up to that point and they are not independent of each other. This makes obtaining an exact expression for the achievable rate challenging. Therefore, we assume that $V(i,l)$ for a given $l$ is independent of every other $V(\bar{i},l), \forall i\neq\bar{i}$ and derive a second approximation on the optimal achievable rate as follows.
	\begin{align}\label{optimal_approx2}
		R^{\textrm{opt}} \approx \dfrac{1}{\log 2}\sum_{q=1}^{\tilde{Q}} a_q \; e^{b_q/\sigma_a^2}\; Ei\biggl(-\dfrac{b_q}{\sigma_a^2}\biggr),
	\end{align}	
	where $\sigma_a^2=\frac{P\sigma_1^2}{\sigma^2}, \tilde{Q}=2M^{L-1}+\sum_{l=1}^{L-2}M^l, \; \forall L>1$ and the coefficients $a_q$ and $b_q$ can be numerically computed for a given $M$ and $L$ by solving the recursive function
	\begin{align*}
		\sum_{q=0}^{\tilde{Q}} a_q \; e^{-b_q\,x/\sigma_a^2} &= F_{V(1,L)}(x) = \bigg(1-e^{-x/\sigma_a^2}\bigg(1-F_{V(1,L-1)}(x)\bigg)\bigg)^M,
	\end{align*}
	with $F_{V(1,1)}(x)=1-e^{-x/\sigma_a^2}$.
	
	\noindent In the following example, we compare the two approximations given in \eqref{optimal_approx1} and \eqref{optimal_approx2}.
	
	\vspace{0.25cm}
	\noindent
	\textbf{Example 2}: Consider a single-user, multi-hop relay network similar to Example 1. For such a network, the analytical and the simulated achievable rate against the received SNR are plotted in Fig. \ref{figure7} and Fig. \ref{figure8} under the optimal relay selection strategy. From the Fig. \ref{figure7}, we observe that the approximate analytical rate achieved with \eqref{optimal_approx2} is more accurate than \eqref{optimal_approx1}. We can also observe that with increasing $M$ both approximations become tight. On the other hand, from the Fig. \ref{figure8}, we observe that with increasing $L$, both approximations become loose while the analytical approximation achieved with \eqref{optimal_approx2} is much more accurate than that of \eqref{optimal_approx1} when $L$ is large. The computation of \eqref{optimal_approx1} involves summation $Q$ numbers and computing $\binom{Q}{q}$ with a time complexity $O(Q)$ where $Q=M^{L-1}$. As such, the approximation in \eqref{optimal_approx1} has a time complexity $O(M^{L-1})$. The computation of \eqref{optimal_approx2} involves summation of $\bar{Q}$ numbers, finding $\bar{Q}+1$ coefficients by comparing two polynomials and computing a recursive function of time complexity $O(M)$ for $L$ times where $\bar{Q}=2M^{L-1}$. As such, the approximation in \eqref{optimal_approx2} has a time complexity $O(M^{L-1})+O(ML)$. Therefore, for smaller $L$ values, the approximation in \eqref{optimal_approx1} has a less computational complexity while providing similar result to \eqref{optimal_approx2}. On the other hand, for larger $L$ values, the complexity of both approximations become closer with the approximation in \eqref{optimal_approx2} providing more accurate results.
	\begin{figure}
		\begin{subfigure}{.5\textwidth}
				\centering
				\includegraphics[width=\textwidth]{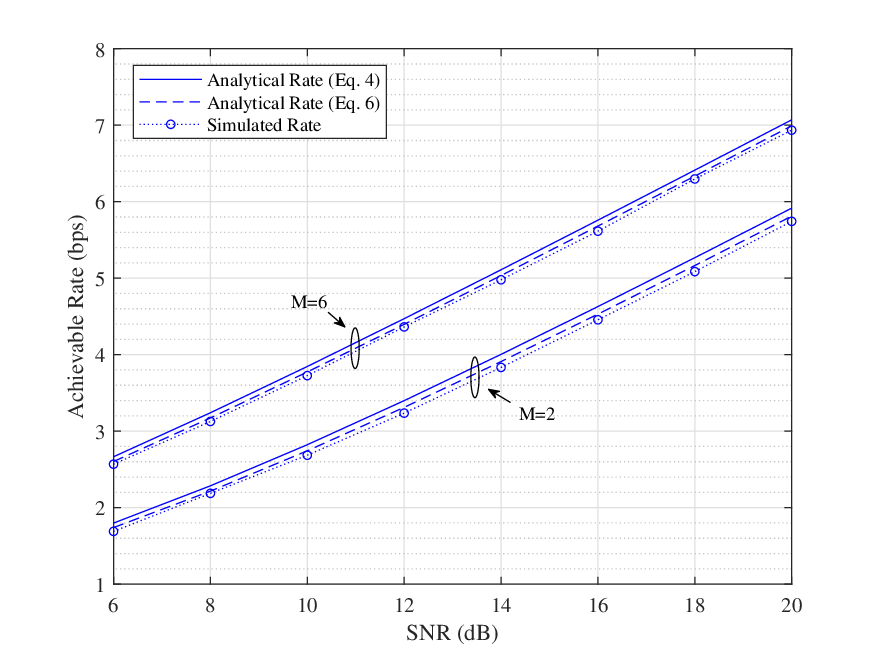}
				\captionsetup{justification=centering}
				\caption{$M=\{2,6\}, L=3$}
				\label{figure7}
			\end{subfigure}
		\begin{subfigure}{.5\textwidth}
				\centering
				\includegraphics[width=\textwidth]{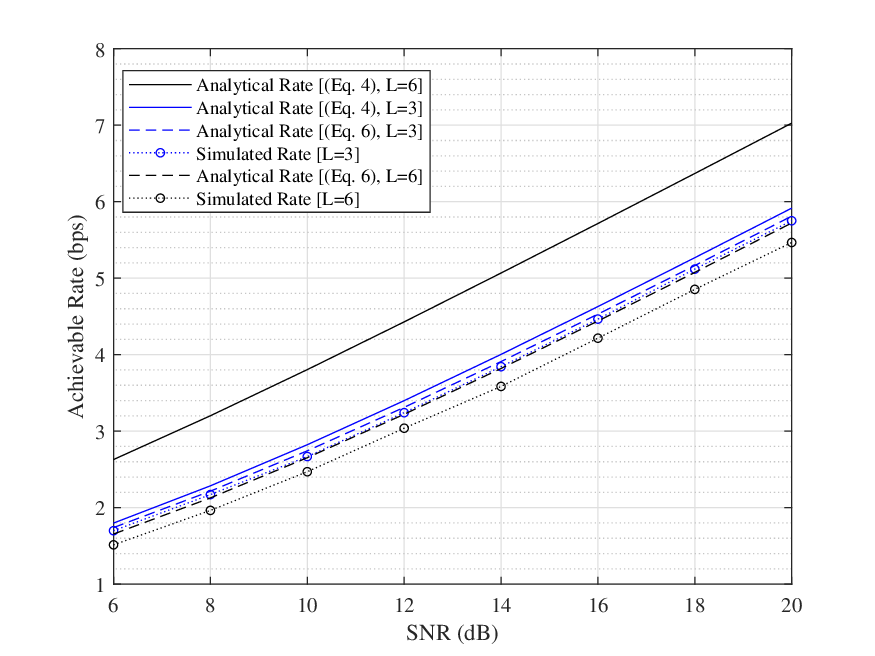}
				\captionsetup{justification=centering}
				\caption{$M=2, L=\{3,6\}$}
				\label{figure8}
			\end{subfigure}
		\caption{The achievable rate against the received SNR}
	\end{figure}	
	
	Even though the optimal relay assignment that maximizes the achievable rate can be obtained with a linear complexity in respect to $L$ and $M$, it requires the global channel state information of all the links in the network. As a result, we next consider, four sub-optimal relay selection strategies that can be implemented in a distributed environment. 
	
	\subsection{Hop-by-Hop Relay Selection}
	As the first sub-optimal relay selection strategy, we consider the simple hop-by-hop relay selection. Under this strategy, the relay selection in each hop is performed independently such that the achievable rate of the current hop is maximized when the signal is transmitted from the relay selected in the previous hop. Therefore, with this strategy, there is no rate optimization involved in the last hop where the destination node is fixed.
	\begin{theorem}\label{theorem_hop}
		Consider a single-user, multi-hop relay network with $L$ hops and $M$ relay nodes per hop. For such a network, the achievable rate under hop-by-hop relay selection is given by 
		\begin{align*}
			R^{\textrm{hop}} = -\dfrac{1}{\log 2}\sum_{{\underset{= L-1}{l_1+...+{l_M}}}} \!\! A^{L-1}_{l_1...l_M} \; e^{\beta^{L-1}_{l_1...l_M}/\sigma_a^2}\; Ei\biggl(-\dfrac{\beta^{L-1}_{l_1...l_M}}{\sigma_a^2}\biggr),
		\end{align*}	
		where $A^{L-1}_{l_1...l_M} = \binom{L-1}{l_1,...,l_M}\biggl(\prod_{j=1}^{M}\binom{M}{j}^{l_j}\biggr)(-1)^{\sum_{j=1}^{M}(j-1)l_j}$, $\beta^{L-1}_{l_1...l_M} = 1+\sum_{j=1}^{M}jl_j, \sigma_a^2=\frac{P\sigma_1^2}{\sigma^2},$ with $l_1,..,l_M$ denote the multinomial expansion exponents, $\binom{L-1}{l_1,...,l_M}$ and $\binom{M}{j}$ denote the multinomial and the binomial coefficients, respectively, and $Ei(.)$ is the exponential integral function.  
	\end{theorem}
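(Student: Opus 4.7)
The plan is to compute the expected logarithm of $1+\gamma^{\textrm{min}}$ directly from the complementary CDF of the end-to-end SNR, exploiting the fact that hop-by-hop selection makes the per-hop maximum SNRs mutually independent across hops.

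First I would characterize the per-hop SNRs. Under hop-by-hop selection, for $l=1,\dots,L-1$ the selected relay in hop $l$ is the one that maximizes $|h[r(l-1),j,l]|^2$ over $j\in\{1,\dots,M\}$; since each $|h|^2$ is exponential with mean $\sigma_1^2$, the resulting SNR $X_l=\tfrac{P}{\sigma^2}\max_j|h[r(l-1),j,l]|^2$ has CDF $F_{X_l}(x)=(1-e^{-x/\sigma_a^2})^M$. In the last hop the destination is fixed, so $X_L=\tfrac{P}{\sigma^2}|h[r(L-1),1,L]|^2$ is a single exponential with CDF $1-e^{-x/\sigma_a^2}$. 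Independence of the original channel coefficients across hops makes $X_1,\dots,X_L$ mutually independent, regardless of which relays happened to be chosen. Therefore the complementary CDF of the bottleneck is
\begin{equation*}
\Pr(\gamma^{\textrm{min}}>g)=\bigl(1-(1-e^{-g/\sigma_a^2})^M\bigr)^{L-1}e^{-g/\sigma_a^2}.
\end{equation*}

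Next I would apply the standard identity $\mathbb{E}[\log_2(1+\gamma)]=\tfrac{1}{\log 2}\int_0^\infty\tfrac{\Pr(\gamma>g)}{1+g}\,dg$. To convert the integrand into elementary pieces I would use the binomial theorem to write $1-(1-e^{-g/\sigma_a^2})^M=\sum_{j=1}^M\binom{M}{j}(-1)^{j-1}e^{-jg/\sigma_a^2}$, and then raise this finite sum to the power $L-1$ via the multinomial theorem. Collecting the exponents, this yields
\begin{equation*}
\bigl(1-(1-e^{-g/\sigma_a^2})^M\bigr)^{L-1}e^{-g/\sigma_a^2}=\sum_{l_1+\cdots+l_M=L-1} A^{L-1}_{l_1\ldots l_M}\,e^{-\beta^{L-1}_{l_1\ldots l_M}\,g/\sigma_a^2},
\end{equation*}
with $A^{L-1}_{l_1\ldots l_M}$ and $\beta^{L-1}_{l_1\ldots l_M}$ exactly as defined in the statement (the extra ``$+1$'' in $\beta$ comes from the lone factor $e^{-g/\sigma_a^2}$ for the last hop, and the sign $(-1)^{\sum(j-1)l_j}$ from the factors $(-1)^{j-1}$).

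Finally I would evaluate each resulting term using the well-known identity $\int_0^\infty\tfrac{e^{-a x}}{1+x}\,dx=-e^{a}\,Ei(-a)$, which holds for $a>0$ (every $\beta^{L-1}_{l_1\ldots l_M}$ here is a positive integer, so the identity applies). Interchanging the sum and integral is trivially justified because the multinomial expansion is a finite sum, and substituting gives precisely the claimed expression for $R^{\textrm{hop}}$. The main obstacle is purely bookkeeping: one must track the multinomial expansion carefully so that the coefficient $A^{L-1}_{l_1\ldots l_M}$ correctly absorbs the binomial factors and signs, and the exponent $\beta^{L-1}_{l_1\ldots l_M}=1+\sum_j j l_j$ picks up the extra $+1$ from the last-hop factor; once that is done the integral evaluates in closed form with no further analytic difficulty.
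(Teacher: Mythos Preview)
Your proposal is correct and follows essentially the same approach as the paper: both derive the complementary CDF $e^{-x/\sigma_a^2}\bigl(1-(1-e^{-x/\sigma_a^2})^M\bigr)^{L-1}$ from independence of the per-hop maxima, expand it via the multinomial theorem, and reduce each term to an exponential integral. The only cosmetic difference is that the paper differentiates to obtain the PDF and then evaluates $\int_0^\infty\log_2(1+x)f(x)\,dx$, whereas you apply the equivalent CCDF identity $\mathbb{E}[\log_2(1+\gamma)]=\tfrac{1}{\log 2}\int_0^\infty\tfrac{\Pr(\gamma>g)}{1+g}\,dg$ directly; the two are related by a single integration by parts and lead to the identical final expression.
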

	\begin{proof} 
		The CDF of $\gamma^{min}$ under hop-by-hop relay selection can be expressed as,
		\begin{align}\label{CDF_hop}
			F_{\gamma^{min}}^{\textrm{hop}}(x) 
			&\!=\! 1\!-\!e^{(-x/\sigma_a^2)}\bigg(1\!-\!\bigg(1\!-\!e^{(-x/\sigma_a^2)}\bigg)^M\bigg)^{L\!-\!1}.
		\end{align}
		Please refer to Appendix \ref{app7:3} for a detailed derivation of \eqref{CDF_hop}. Next, we use the multinomial expansion \cite{00102} and re-write \eqref{CDF_hop} as,
		\begin{align}
			F_{\gamma^{min}}^{\textrm{hop}}(x) &= 1-\sum_{{\underset{= L-1}{l_1+...+{l_M}}}} \bigg[ \binom{L-1}{l_1,...,l_M}\biggl(\prod_{j=1}^{M}\binom{M}{j}^{l_j}\biggr)
			(-1)^{\sum_{j=1}^{M}(j-1)l_j} e^{-(1+\sum_{j=1}^{M}jl_j)x/\sigma_a^2}\bigg].
		\end{align}
		Taking the derivative with respect to $x$, we can derive the PDF of $\gamma^{min}$ as,
		\begin{align}
			f_{\gamma^{min}}^{\textrm{hop}}(x) = \!\! \sum_{{\underset{= L-1}{l_1+...+{l_M}}}} \!\! A^{L-1}_{l_1...l_M}\biggl(\dfrac{\beta^{L-1}_{l_1...l_M}}{\sigma_a^2}\biggr)e^{-\beta^{L-1}_{l_1...l_M}x/\sigma_a^2}.
		\end{align}
		As such the achievable rate can be derived using the variable transformation $t = 1+x$ and \cite[eq. (4.331.2)]{2014520} as,
		\begin{align}
			R^{\textrm{hop}} &= \dfrac{-1}{\log 2}\sum_{{\underset{= L-1}{l_1+...+{l_M}}}} \!\! A^{L-1}_{l_1...l_M}\; e^{\beta^{L-1}_{l_1...l_M}/\sigma_a^2}\; Ei\biggl(-\dfrac{\beta^{L-1}_{l_1...l_M}}{\sigma_a^2}\biggr),
		\end{align}
		where $Ei(.)$ denotes the exponential integral function.
		This completes the proof of Theorem~\ref{theorem_hop}.
	\end{proof}
	
	\subsection{Ad-hoc Relay Selection}
	Since, hop-by-hop relay selection does not involve rate optimization in the last hop, it cannot achieve full-diversity. As such, we next consider ad-hoc relay selection. Under this strategy, hop-by-hop relay selection is extended to achieve full-diversity by jointly combining the final two hops. Therefore, while the first $L-2$ relays are selected similar to the hop-by-hop relay selection, the last relay is selected such that the achievable rate of the last two hops is maximized.
	\begin{theorem}\label{theorem_adhoc}
		Consider a single-user, multi-hop relay network with $L$ hops and $M$ relay nodes per hop. For such a network, the achievable rate under ad-hoc relay selection is given by
		\begin{align*}
			R^{\textrm{ad-hoc}} &=  -\dfrac{1}{\log 2}\sum_{i=1}^M \sum_{{\underset{= L-2}{l_1+...+{l_M}}}} \biggl[ A^{L-2}_{l_1...l_M}(i)\; e^{\beta^{L-2}_{l_1...l_M}(i)/\sigma_a^2} Ei\biggl(-\dfrac{\beta^{L-2}_{l_1...l_M}(i)}{\sigma_a^2}\biggr)\biggr],
		\end{align*}	
		where $\sigma_a^2\!=\!\frac{P\sigma_1^2}{\sigma^2}$, $A^{L-2}_{l_1...l_M}(i) \!=\! \binom{M}{i}\binom{L-2}{l_1,...,l_M} \biggl(\prod_{j=1}^{M}\binom{M}{j}^{l_j}\biggr)$ $(-1)^{i-1+\sum_{j=1}^{M}(j-1)l_j}$, $\beta^{L-2}_{l_1...l_M}(i) = 2i+\sum_{j=1}^{M}jl_j$ with $l_1,..,l_M$ denote the multinomial expansion exponents, $\binom{L-2}{l_1,...,l_M}$ and $\binom{M}{j}$ denote the multinomial and the binomial coefficients, respectively, and $Ei(.)$ is the exponential integral function.
	\end{theorem}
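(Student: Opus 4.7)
The plan is to mirror the proof of Theorem~\ref{theorem_hop} but to decompose the network into two stochastically independent blocks that reflect the structure of ad-hoc selection: the first $L-2$ hops, over which the relay is picked hop-by-hop, and the final two hops, over which $r(L-1)$ is chosen jointly to maximise the minimum of the two link SNRs. Since distinct hops involve disjoint sets of channel coefficients and the choice of $r(L-2)$ depends only on the hop-$(L-2)$ channels, conditioning on $r(L-2)$ shows that the bottleneck SNR across the first $L-2$ hops is independent of the bottleneck SNR across the final pair. The survival function of $\gamma^{min}$ therefore factors as
\begin{align*}
1-F_{\gamma^{min}}^{\textrm{ad-hoc}}(x) = \Pr(G_1>x)\,\Pr(G_2>x),
\end{align*}
where $G_1$ is the minimum of $L-2$ iid copies of the maximum of $M$ iid exponential variables with mean $\sigma_a^2$, and $G_2$ is the maximum over $M$ iid copies of the minimum of two such exponentials.

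For the first block, repeating the hop-by-hop argument that led to \eqref{CDF_hop} (with $L-1$ replaced by $L-2$) gives $\Pr(G_1>x) = (1-(1-e^{-x/\sigma_a^2})^M)^{L-2}$. For the second block, the minimum of two independent exponentials with rate $1/\sigma_a^2$ is exponential with rate $2/\sigma_a^2$, so $\Pr(G_2>x) = 1-(1-e^{-2x/\sigma_a^2})^M$. A binomial expansion of this second factor produces a sum over $i\in\{1,\ldots,M\}$ with typical term $\binom{M}{i}(-1)^{i-1}e^{-2ix/\sigma_a^2}$, which is the origin of the outer $\sum_{i=1}^{M}$ and of the $2i$ appearing in $\beta^{L-2}_{l_1\ldots l_M}(i)$.

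Next, applying the multinomial expansion used in the proof of Theorem~\ref{theorem_hop} to the first factor produces terms of the form $A^{L-2}_{l_1\ldots l_M}e^{-(\sum_j jl_j)x/\sigma_a^2}$; multiplying by the binomial expansion of the second factor merges the exponents into $\exp(-\beta^{L-2}_{l_1\ldots l_M}(i)x/\sigma_a^2)$ and combines the sign contributions of the two expansions into the $(-1)^{i-1+\sum_j(j-1)l_j}$ factor of $A^{L-2}_{l_1\ldots l_M}(i)$. Differentiating with respect to $x$ gives the PDF of $\gamma^{min}$ as a finite double sum of pure exponentials, and substituting this into $R^{\textrm{ad-hoc}}=\int_0^\infty \log_2(1+x)f_{\gamma^{min}}^{\textrm{ad-hoc}}(x)\,dx$ with the change of variable $t=1+x$ followed by \cite[eq.~(4.331.2)]{2014520} yields each term of the closed-form expression involving $Ei$.

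The main obstacle is the independence claim that underpins the product factorisation of the survival function: it must be justified carefully because $r(L-2)$ is random and the hop-$(L-1)$ channels are naturally indexed by that relay, so one needs the conditioning argument described above before the two blocks can be treated separately. Once this independence is in place, the remainder is algebraic book-keeping already validated in Theorem~\ref{theorem_hop}, together with a single extra binomial sum over the joint choice of relay in the last two hops.
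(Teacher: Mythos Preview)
Your proposal is correct and follows essentially the same route as the paper: the paper likewise factors the survival function of $\gamma^{min}$ into the product $\bigl(1-(1-e^{-2x/\sigma_a^2})^M\bigr)\bigl(1-(1-e^{-x/\sigma_a^2})^M\bigr)^{L-2}$, applies the binomial expansion to the first factor and the multinomial expansion to the second, differentiates, and then evaluates the rate integral via $t=1+x$ and \cite[eq.~(4.331.2)]{2014520}. If anything, your conditioning argument for the independence of the two blocks is more explicit than the paper's, which simply asserts independence of the per-hop maxima.
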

	\begin{proof}
		The CDF of $\gamma^{min}$ under ad-hoc relay selection can be expressed as, 
		\begin{align}\label{CDF_adhoc}
			F_{\gamma^{min}}^{\textrm{ad-hoc}}(x) =& 1-\biggl(1-\bigg(1- e^{(-2x/\sigma_a^2)}\bigg)^M\biggr) \bigg(1-\bigg(1-e^{(-x/\sigma_a^2)}\bigg)^M\bigg)^{L-2}.
		\end{align}
		Please refer to Appendix \ref{app7:4} for a detailed derivation of \eqref{CDF_adhoc}. Next, we use the multinomial expansion \cite{00102} and re-write \eqref{CDF_adhoc} as,
		\begin{align}
			F_{\gamma^{min}}^{\textrm{ad-hoc}}(x) =& 1\!-\!\sum_{i=1}^M \sum_{{\underset{= L-2}{l_1+...+{l_M}}}} \!\biggl[\!\! \binom{M}{i}\!\! \binom{L-2}{l_1,...,l_M} \!\!\biggl(\prod_{j=1}^{M}\!\!\binom{M}{j}^{l_j}\!\biggr)  (-1)^{i-1+\sum_{j=1}^{M}(j-1)l_j} e^{-(2i+\sum_{j=1}^{M}jl_j)x/\sigma_a^2}\biggr].
		\end{align}
		Taking the derivative with respect to $x$, we can derive the PDF of $\gamma^{min}$ as,
		\begin{align}
			f_{\gamma^{min}}^{\textrm{ad-hoc}}(x) 
			=& \sum_{i=1}^M \sum_{{\underset{= L-2}{l_1+...+{l_M}}}} \biggl[ A^{L-2}_{l_1...l_M}(i)\biggl(\dfrac{\beta^{L-2}_{l_1...l_M}(i)}{\sigma_a^2}\biggr) e^{-\beta^{L-2}_{l_1...l_M}(i)x/\sigma_a^2}\biggr].
		\end{align}
		As such the achievable rate can be derived using the variable transformation $t = 1+x$ and \cite[eq. (4.331.2)]{2014520} as,
		\begin{align}
			R^{\textrm{ad-hoc}} &= -\dfrac{1}{\log 2}\sum_{i=1}^M \sum_{{\underset{= L-2}{l_1+...+{l_M}}}} \biggl[ A^{L-2}_{l_1...l_M}(i)\; e^{\beta^{L-2}_{l_1...l_M}(i)/\sigma_a^2} Ei\biggl(-\dfrac{\beta^{L-2}_{l_1...l_M}(i)}{\sigma_a^2}\biggr)\biggr],
		\end{align}
		where $Ei(.)$ denotes the exponential integral function.
		This completes the proof of Theorem~\ref{theorem_adhoc}.
	\end{proof}
	
	\subsection{Block-by-Block Relay Selection}
	Next, we consider block-by-block relay selection to improve the performance further. Under this strategy, $L$ hops are divided into non-overlapping blocks of $w$ hops and the relays are selected such that the achievable rate of each block is maximized. An example of such a network, where the block size $w$ = 2 is illustrated in Fig. \ref{figure5}. With this strategy, it is important to ensure that the block size is selected such that the last block would be greater than one. Otherwise, there will be no rate optimization involved with the last hop where the destination node is fixed. We note that analyzing the achievable rate under block-by-block relay selection is difficult for $w>2$ due to the larger number of dependent links available within a given block that increases with $w$. However, $w=2$ is tractable and as such, we select the block size of two for analytical insights and derive the achievable rate in the following theorem.
	\begin{figure}
		\centerline{\includegraphics[width=0.75\textwidth]{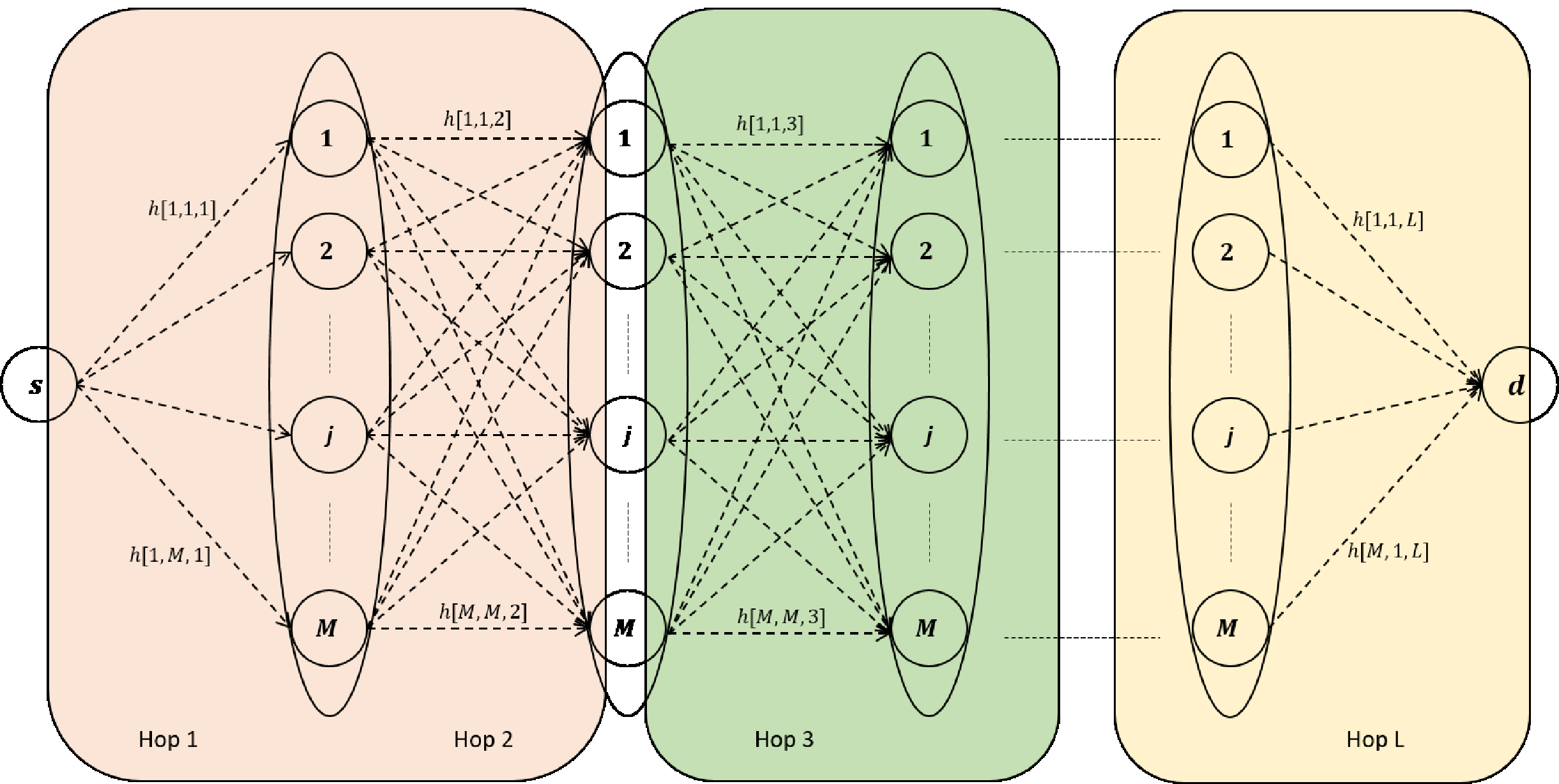}}
		\caption{Multi-hop relay network separated into non-overlapping blocks of $w=2$ hops}
		\label{figure5}
	\end{figure}
	\begin{theorem}\label{theorem_block}
		Consider a single-user, multi-hop relay network with $L$ hops and $M$ relay nodes per hop. For such a network, the achievable rate under block-by-block relay selection with $w=2$ can be expressed as,
		\begin{align*}
			R^{\textrm{block}} =& \dfrac{1}{\log 2}\sum_{i=1}^M \biggl\{ \sum^{T-1}_t \sum_{\underset{=Mt} {l_0+\dots+{l_M}}} \biggl[ A^{Mt}_{l_0\dots l_M}\!(i,t) e^{\beta^{Mt}_{l_0\dots l_M}(i)/\sigma_a^2} Ei\biggl(\dfrac{-\beta^{Mt}_{l_0\dots l_M}(i)}{\sigma_a^2}\biggr)\!\biggr] + \\ & \hspace{200pt} \binom{M}{i} (-1)^{i} e^{2i/\sigma_a^2} Ei\biggl(\!\dfrac{-2i}{\sigma_a^2}\biggr)\!\!\biggr\},
		\end{align*}	
		where $\sigma_a^2\!=\!\frac{P\sigma_1^2}{\sigma^2}, T\!=\!\left\lceil \frac{L}{w}\right\rceil$,  $A^{Mt}_{l_0...l_M}(i,t) \!=\! \binom{M}{i} \binom{T-1}{t}$ $ \binom{Mt}{l_0,...,l_M}\biggl(\prod_{j=1}^{M}\binom{M}{j}^{l_j}\biggr)(-1)^{i+t+\sum_{j=1}^{M}jl_j}$, $\beta^{Mt}_{l_0...l_M}(i) = 2i+\sum_{j=1}^{M}(j+1)l_j$ with $l_0,..,l_M$ denote the multinomial expansion exponents, $\binom{Mt}{l_0,...,l_M}$ and $\binom{M}{j}$ denote the multinomial and the binomial coefficients, respectively, and $Ei(.)$ is the exponential integral function. .
	\end{theorem}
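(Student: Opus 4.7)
My plan mirrors the recipe used for Theorem~\ref{theorem_hop} and Theorem~\ref{theorem_adhoc}: first I derive the CDF of $\gamma^{min}$ under block-by-block relay selection with $w=2$, then I turn it into a finite sum of exponentials $e^{-\beta x/\sigma_{a}^{2}}$ via successive binomial and multinomial expansions, differentiate to obtain the PDF, and integrate against $\log_{2}(1+x)$ using the identity $\int_{0}^{\infty}\log_{2}(1+x)e^{-\mu x}\,dx=-\frac{1}{\mu\log 2}e^{\mu}Ei(-\mu)$ underlying \cite[eq.~(4.331.2)]{2014520}.

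The network-specific step is the block CDF. With $w=2$ and $T=\lceil L/w\rceil$, the $L$ hops split into $T-1$ \emph{interior} blocks, in each of which the intermediate relay and the block-outgoing relay are jointly selected to maximize the block bottleneck SNR, together with one \emph{terminal} block in which the outgoing node is the fixed destination and only the intermediate relay is chosen. Because the channels of distinct blocks are mutually independent (and, by the symmetry of the Rayleigh model, the identity of the boundary relay does not affect the distribution of the next block), the block bottleneck SNRs are mutually independent and
\begin{align*}
1-F_{\gamma^{min}}^{\textrm{block}}(x)=\bigl(1-F_{Z^{\textrm{inner}}}(x)\bigr)^{T-1}\bigl(1-F_{Z^{\textrm{last}}}(x)\bigr).
\end{align*}
For the interior block, the identity $\max_{j}\min(\gamma X_{i},\gamma Y_{ij})=\min(\gamma X_{i},\gamma\max_{j}Y_{ij})$ collapses the joint optimization to the maximum over $i$ of $\min(\gamma X_{i},\gamma W_{i})$, where $W_{i}$ is the maximum of $M$ i.i.d.\ exponential second-hop gains, yielding $F_{Z^{\textrm{inner}}}(x)=\bigl[1-e^{-x/\sigma_{a}^{2}}+e^{-x/\sigma_{a}^{2}}(1-e^{-x/\sigma_{a}^{2}})^{M}\bigr]^{M}$; the terminal block gives the simpler $(1-e^{-2x/\sigma_{a}^{2}})^{M}$.

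The bookkeeping step is where the work piles up. Writing $u=e^{-x/\sigma_{a}^{2}}$ and $H(u)=1-u+u(1-u)^{M}=1+\sum_{k=1}^{M}\binom{M}{k}(-1)^{k}u^{k+1}$, an $(M+1)$-term polynomial in $u$, I would first apply the binomial theorem to $(1-H(u)^{M})^{T-1}=\sum_{t=0}^{T-1}\binom{T-1}{t}(-1)^{t}H(u)^{Mt}$ and then the multinomial theorem to each $H(u)^{Mt}$ over its $M+1$ monomials. The constant monomial carries exponent zero in $u$ and so drops out of the final power of $u$, leaving the overall exponent $\sum_{j=1}^{M}(j+1)l_{j}$ and a coefficient proportional to $\binom{T-1}{t}\binom{Mt}{l_{0},\dots,l_{M}}\prod_{j=1}^{M}\binom{M}{j}^{l_{j}}(-1)^{t+\sum_{j}jl_{j}}$. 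Multiplying by the simpler expansion $1-F_{Z^{\textrm{last}}}(x)=\sum_{i=1}^{M}\binom{M}{i}(-1)^{i+1}u^{2i}$ and merging the three sums produces exactly $\beta^{Mt}_{l_{0}\dots l_{M}}(i)=2i+\sum_{j=1}^{M}(j+1)l_{j}$ and the claimed $A^{Mt}_{l_{0}\dots l_{M}}(i,t)$, up to an overall sign flip that converts the prefactor from $-1/\log 2$ (as in Theorems~\ref{theorem_hop}--\ref{theorem_adhoc}) to $+1/\log 2$ here.

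The closing steps are routine: differentiating the CDF picks up a factor $\beta/\sigma_{a}^{2}$ per term, substituting into $R=\int_{0}^{\infty}\log_{2}(1+x)f_{\gamma^{min}}^{\textrm{block}}(x)\,dx$ and invoking the $Ei$ identity cancel this factor and yield the claimed expression, and peeling off the $t=0$ contribution of the inner summation isolates the stand-alone summand in the theorem — when $t=0$ the multinomial constraint forces $l_{0}=\dots=l_{M}=0$, so $\beta$ collapses to $2i$ and the coefficient to $\binom{M}{i}(-1)^{i}$. I expect the main obstacle to be the interior-block expansion, both because the nested $(1-H(u)^{M})^{T-1}$ structure is genuinely more delicate than the flat power appearing in Theorem~\ref{theorem_hop}, and because correctly tracking the sign $(-1)^{i+t+\sum jl_{j}}$ across three nested expansions and confining the $l_{0}$ index to the multinomial coefficient (rather than to $\beta$) must be done with some care.
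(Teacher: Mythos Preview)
Your proposal is correct and follows essentially the same approach as the paper: derive the block-by-block CDF by treating the $T-1$ interior blocks (with CDF $H(u)^{M}$ where $H(u)=1-u+u(1-u)^{M}$) and the terminal block (with CDF $(1-u^{2})^{M}$) as independent, expand $(1-H(u)^{M})^{T-1}$ binomially in $t$, expand $H(u)^{Mt}$ multinomially over its $M+1$ monomials, multiply by the terminal-block survival expansion, differentiate, and apply the exponential-integral identity. The only cosmetic difference is that the paper separates the $t=0$ contribution already at the CDF stage (so the stand-alone $\binom{M}{i}(-1)^{i}e^{2i/\sigma_{a}^{2}}Ei(-2i/\sigma_{a}^{2})$ term appears explicitly in the expanded CDF), whereas you keep it in the triple sum and peel it off at the end; both routes give the same result.
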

	\begin{proof}
		The CDF of $\gamma^{min}$ under block-by-block relay selection with $w=2$ can be expressed as \begin{align}
			F_{\gamma^{min}}^{\textrm{block}}(x) &= 1\!-\!\biggl(1\!-\!\bigg(1\!- e^{(-2x/\sigma_a^2)}\bigg)^M\biggr)\bigg(1\!-\!\bigg(1-e^{(-x/\sigma_a^2)}\bigg(1-(1-e^{(-x/\sigma_a^2)})^M\bigg)\bigg)^M\bigg)^{T\!-\!1}. \label{CDF_block} 
		\end{align}
		Please refer to Appendix \ref{app7:5} for a detailed derivation of \eqref{CDF_block}. 	
		Next, we use the multinomial expansion \cite{00102} and re-write the CDF as,
		\begin{align}
			F_{\gamma^{min}}^{\textrm{block}}(x) &= \sum_{i=1}^M\! \binom{M}{i}\!(-1)^i e^{-2ix/\sigma_a^2} \!+\! \sum_{i=1}^M \sum_{t=1}^{T-1} \sum_{\underset{=Mt} {l_0+...+{l_M}}}\!\biggl[\!\binom{M}{i}\binom{T-1}{t} \binom{Mt}{l_0,...,l_M}\biggl(\prod_{j=1}^{M}\binom{M}{j}^{l_j}\biggr)\nonumber \\ & \hspace{175pt}(-1)^{i+t+\sum_{j=1}^{M}jl_j}\;  e^{-(2i+\sum_{j=1}^{M}(j+1)l_j)x/\sigma_a^2}\biggr]+1. \label{eq_16}
		\end{align}
		Taking the derivative with respect to $x$, we can derive the PDF of $\gamma^{min}$ as, 
		\begin{align}
			f_{\gamma^{min}}^{\textrm{block}}(\!x\!) 
			&= -\!\sum_{i=1}^M \!\!\binom{M}{i} (-1)^i \bigg(\!\dfrac{2i}{\sigma_a^2}\!\bigg) e^{-2ix/\sigma_a^2} \!-\!\sum_{i=1}^M \sum_{t=1}^{T-1} \! \sum_{\underset{=Mt} {l_0+...+{l_M}}} A^{Mt}_{l_0...l_M}(i,t)  \bigg(\!\dfrac{\beta^{Mt}_{l_0...l_M}(i)}{\sigma_a^2}\!\bigg) e^{-\beta^{Mt}_{l_0...l_M}(i)x/\sigma_a^2}. \label{eq_17} 
		\end{align}
		As such the achievable rate can be derived using the variable transformation $t = 1+x$ and \cite[eq. (4.331.2)]{2014520} as, 
		\begin{align}
			R^{\textrm{block}} &= \dfrac{1}{\log 2}\sum_{i=1}^M \biggl\{\sum^{T-1}_t \sum_{\underset{=Mt} {l_0+\dots+{l_M}}} A^{Mt}_{l_0\dots l_M}(i,t) e^{\beta^{Mt}_{l_0\dots l_M}(i)/\sigma_a^2} Ei\biggl(\dfrac{-\beta^{Mt}_{l_0\dots l_M}(i)}{\sigma_a^2}\biggr)+\nonumber\\ & \hspace{200pt}\binom{M}{i} (-1)^{i} e^{2i/\sigma_a^2} Ei\biggl(\dfrac{-2i}{\sigma_a^2}\biggr)\biggr\}. \label{eq_18}
		\end{align}
		where $Ei(.)$ denotes the exponential integral function. This completes the proof of Theorem~\ref{theorem_block}. 
	\end{proof} 	
	
	\subsection{Sliding Window based Relay Selection}
	Finally, we consider sliding window based relay selection to remove the dependency of block size on the number of hops and to improve the performance further. Under this strategy, we consider a sliding window of $w$ hops to determine the relay selection in the first hop of the window. For example, we start by considering the first $w$ hops and find the relay selection such that the achievable rate in those $w$ hops is maximized. However, we only fix the relay selected in the first hop. Next, we consider $w$ hops from the second hop to $w+1$ and fix the relay selected for the second hop. We continue this until relay selection is fixed for first $L-w$ hops. Then we consider the last $w$ hops and fix the relays selected for all of them. Similar to block-by-block relay selection, we consider the window size of two to obtain analytical insights under sliding window based relay selection. However, it is important to note that, under this relay selection strategy, one window of $w$ hops is not independent of the next window as they consist of at least one common hop, as illustrated in Fig. \ref{figure6}. Therefore, it is difficult to obtain an exact expression for the achievable rate. Thus, we take a more tractable approach by considering an approximate relay network with independent windows.
	\begin{figure}
		\centerline{\includegraphics[width=0.75\textwidth]{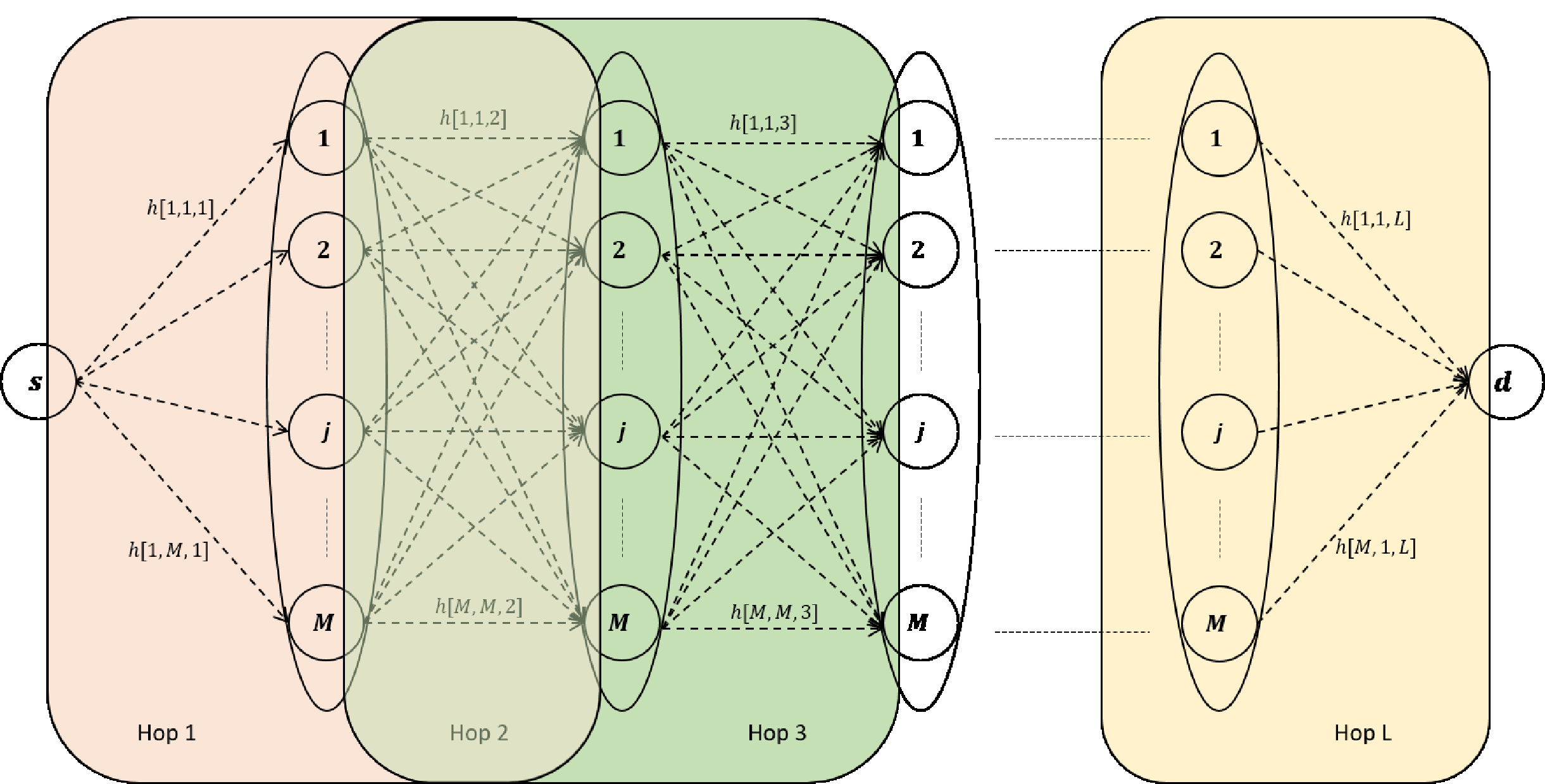}}
		\caption{Multi-hop relay network with sliding window of $w=2$ hops}
		\label{figure6}
	\end{figure}
	
	\subsubsection*{Independent Window Model}
	In this section we consider an approximated model for sliding window based relay selection such that each window is independent to every other window. Under this model, there are $L-w+1$ independent windows. As such, we can derive the achievable rate under the independent window assumption as follows.
	\begin{lemma}\label{theorem_sliding}
		Consider a single-user, multi-hop relay network with $L$ hops and $M$ relay nodes per hop. Under the independent window assumption described above, the achievable rate with sliding window based relay selection and $w=2$ can be expressed as,
		\begin{align*}
			R^{\textrm{sliding}}_{\textrm{ind}} &= \dfrac{1}{\log 2}\sum_{i=1}^M \sum_{\underset{=T-1} {l_1+\dots+{l_{\bar{Q}}}}} A^{T-1}_{l_1...l_{\bar{Q}}}(i) \; e^{\beta^{T-1}_{l_1...l_{\bar{Q}}}(i)/\sigma_a^2}Ei\biggl(-\dfrac{\beta^{T-1}_{l_1...l_{\bar{Q}}}(i)}{\sigma_a^2}\biggr),
		\end{align*}	
		where $\sigma_a^2=\frac{P\sigma_1^2}{\sigma^2}, T=L-w+1$, $\bar{Q}=M^2(M^2\!-\!1)$, $A^{T-1}_{l_1...l_{\bar{Q}}}(i) \!=\! \binom{M}{i} \binom{T-1}{l_1,...,l_{\bar{Q}}}\! \biggl(\prod_{q=1}^{\bar{Q}}(a_q)^{l_q}\biggr)(-1)^{i}$, $\beta^{T-1}_{l_1...l_{\bar{Q}}}(i) = 2i+\sum_{q=1}^{\bar{Q}}b_ql_q$ with $l_1,..,l_{\bar{Q}}$ denote the multinomial expansion exponents, $\binom{T-1}{l_1,...,l_{\bar{Q}}}$ and $\binom{M}{j}$ denote the multinomial and the binomial coefficients, respectively, $Ei(.)$ is the exponential integral function and the coefficients $a_q$ and $b_q$ can be numerically computed for a given $M$ by solving the integral
		\begin{align*}
			\sum_{q=1}^{\bar{Q}} a_q\; e^{-b_q\;x/\sigma_a^2} &= 1\!-\!(1\!-\!e^{-x/\sigma_a^2})^M - \!\sum_{N=1}^{M-1} P^M_N e^{-(M-N)x/\sigma_a^2} \bigg[\int_{\gamma_{11}^t =0}^{x}\!\!\!\!\!\!e^{-\gamma_{11}^t}\int_{\gamma_{12}^t =\gamma_{11}^t}^{x}\!\!\!\!\!\!\!\!\!e^{-\gamma_{12}^t}\dots\int_{\gamma_{1N}^t =\gamma_{1,N-1}^t}^{x}\!\!\!\!\!\!\!\!\!\!\!\!e^{-\gamma_{1N}^t}\nonumber \\ &\quad  \sum_{k=1}^{N}(M-N)\bigg(\!\dfrac{(1\!-\!e^{-\gamma_{1k}^t/\sigma_a^2})^{M(M-k)}}{M\!-\!k}- \dfrac{(1\!-\!e^{-\gamma_{1k}^t/\sigma_a^2})^{M(M-k+1)}}{M\!-\!k\!+\!1} \!\bigg)  d\gamma_{11}^t \dots d\gamma_{1N}^t \bigg],
		\end{align*}
		with $P^M_N$ denoting the number of possible permutations when $N$ units are selected from $M$.
	\end{lemma}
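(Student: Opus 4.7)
The plan is to follow the template that produced Theorems~\ref{theorem_hop}--\ref{theorem_block}. I would first derive the CCDF of $\gamma^{\min}$ under sliding-window selection with $w=2$ and the independent-window assumption, expand it as a finite linear combination of exponentials via the multinomial theorem, differentiate to obtain the PDF, and evaluate $R = \frac{1}{\log 2}\int_0^\infty \log(1+x) f(x)\,dx$ using the substitution $t=1+x$ together with \cite[eq.~(4.331.2)]{2014520} to generate the $Ei(\cdot)$ terms.

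The first step is to identify what each window contributes to $\gamma^{\min}$. With $w=2$, window $l$ covers hops $l$ and $l+1$ and fixes $r(l)$; for $l=1,\dots,L-2$ this determines the hop-$l$ SNR $\gamma_l$, while the terminal window $l=L-1$ simultaneously fixes $r(L-1)$ and therefore both $\gamma_{L-1}$ and $\gamma_L$. Under the independent-window assumption, $\gamma^{\min}$ is the minimum of $T-1=L-2$ non-boundary contributions together with a single boundary contribution, all treated as mutually independent, so the CCDF factorises as the product of these per-window CCDFs. The boundary factor follows directly: the terminal window realises $\max_i \min(X_i,Y_i)$ with $X_i,Y_i$ i.i.d.\ exponentials of parameter $1/\sigma_a^2$, giving CDF $(1-e^{-2x/\sigma_a^2})^M$. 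This is what produces the $\binom{M}{i}(-1)^i$ sign pattern and the $2i$ in $\beta^{T-1}_{l_1\dots l_{\bar{Q}}}(i)$.

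The main obstacle is the non-boundary per-window CCDF, which is the distribution of $X_{i^*}$, where $i^* = \arg\max_i \min(X_i,W_i)$ with $W_i = \max_j Y_{i,j}$. The plan for this step is to exploit symmetry to write $P(X_{i^*}>x) = M\,P(X_1>x,\ Z_1 > Z_k\ \forall k\ne 1)$ with $Z_i = \min(X_i,W_i)$, then to expose $W_1$ as the maximum of $M$ i.i.d.\ exponentials by conditioning both on how many of the $Y_{1,j}$ exceed a given threshold and on the ordering of those that do. The permutation factor $P^M_N$ enumerates the labellings of the $N$ active variables in a given case, the nested integrals over the ordered $\gamma_{11}^t<\dots<\gamma_{1N}^t$ integrate their joint density, and the inner sum $\sum_{k=1}^N$ assembles the probability that every competing $Z_k$ is dominated given this configuration. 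Because each factor in the integrand is a polynomial in $e^{-(\cdot)/\sigma_a^2}$ times an exponential, every layer of integration collapses in closed form to a finite linear combination of exponentials in $x$, which is collected into $\sum_{q=1}^{\bar{Q}} a_q e^{-b_q x/\sigma_a^2}$ with $\bar{Q} = M^2(M^2-1)$ accounting for the surviving monomials.

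With both factors in exponential form, the CCDF of $\gamma^{\min}$ reads $\bigl(\sum_{i=1}^M \binom{M}{i}(-1)^{i+1} e^{-2ix/\sigma_a^2}\bigr)\bigl(\sum_q a_q e^{-b_q x/\sigma_a^2}\bigr)^{T-1}$. Applying the multinomial theorem to the $(T-1)$-th power with exponents $l_1+\dots+l_{\bar{Q}}=T-1$ and distributing the boundary sum produces exactly the coefficients $A^{T-1}_{l_1\dots l_{\bar{Q}}}(i)$ and exponents $\beta^{T-1}_{l_1\dots l_{\bar{Q}}}(i)$ in the statement. Differentiating term-by-term then delivers the PDF as the same linear combination rescaled by $\beta^{T-1}_{l_1\dots l_{\bar{Q}}}(i)/\sigma_a^2$; substituting into the rate integral and applying \cite[eq.~(4.331.2)]{2014520} yields $-\frac{\sigma_a^2}{\beta} e^{\beta/\sigma_a^2} Ei(-\beta/\sigma_a^2)$ per term, the $\beta/\sigma_a^2$ cancels, and the overall minus turns $(-1)^{i+1}$ into $(-1)^i$, reproducing the stated expression with $+\frac{1}{\log 2}$ out front. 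The combinatorial bookkeeping needed for the non-boundary CCDF is the only genuinely delicate step; everything else is the same algebraic pipeline used for the preceding theorems.
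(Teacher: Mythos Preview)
Your high-level pipeline --- derive the CDF of $\gamma^{\min}$ as a product of a terminal factor $(1-e^{-2x/\sigma_a^2})^M$ and $T-1$ identical non-terminal factors, express each as a finite exponential sum, apply the multinomial theorem, differentiate, and then evaluate the rate integral via \cite[eq.~(4.331.2)]{2014520} --- is exactly what the paper does, and your treatment of the terminal window and the subsequent algebra is correct.

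The gap is in your plan for the non-terminal per-window CCDF. After your symmetry reduction to $M\,P(X_1>x,\ Z_1>Z_k\ \forall k\neq 1)$ you propose to condition on the $Y_{1,j}$, i.e.\ on the second-hop SNRs out of relay~$1$. That conditioning reveals $W_1$ but tells you nothing about the competing $Z_k=\min(X_k,W_k)$ for $k\geq 2$, so it cannot produce the nested-integral structure in the statement. In the paper the integration variables $\gamma_{11}^t,\dots,\gamma_{1N}^t$ are the \emph{first-hop} SNRs --- your $X_k$, not your $Y_{1,j}$. The paper conditions on all $M$ first-hop SNRs, lets $N$ count how many satisfy $\gamma_{1k}^t\le x$ (the remaining $M-N$ contribute the factor $e^{-(M-N)x/\sigma_a^2}$), integrates over the ordered $N$-tuple (whence the nested limits and the permutation count $P^M_N$), and for each ordering computes $\sum_{k=1}^{N}P(\gamma_{\max}^{(t)}=\gamma_{1k}^t)$ via a separate combinatorial lemma. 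That lemma splits on whether $W_k\ge \gamma_{1k}^t$ or $W_k<\gamma_{1k}^t$ and, using independence of the second-hop maxima $W_j$, collapses to the $(1-e^{-\gamma_{1k}^t/\sigma_a^2})^{M(M-k)}$ and $(1-e^{-\gamma_{1k}^t/\sigma_a^2})^{M(M-k+1)}$ terms you see inside the integral. So the ``delicate bookkeeping'' you flag is real, but it runs through the first-hop SNRs across all $M$ relays, and your symmetry-on-relay-$1$ plus conditioning-on-$Y_{1,j}$ route does not land on the stated integral.
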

	\begin{proof}
		The CDF of $\gamma^{min}$ under sliding window based relay selection with $w=2$ and independent window assumption can be expressed as,  
		\begin{align} 
			F_{\gamma^{min}}^{\textrm{sliding}}(x) 
			&= 1-\biggl(1-\bigg(1- e^{(-2x/\sigma_a^2)}\bigg)^M\biggr)\biggl(1-(1-e^{-x/\sigma_a^2})^M - \sum_{N=1}^{M-1} P^M_N e^{-(M-N)x/\sigma_a^2} \nonumber \\ & \qquad \bigg(\int_{\gamma_{11}^t =0}^{x}\int_{\gamma_{12}^t =\gamma_{11}^t}^{x}...\int_{\gamma_{1N}^t =\gamma_{1,N-1}^t}^{x}\sum_{k=1}^{N}P(\gamma_{max}^{(t)}=\gamma_{1k}^t)\; e^{-\gamma_{11}^t}... e^{-\gamma_{1N}^t} d\gamma_{11}^t ... d\gamma_{1N}^t \bigg)\!\biggr)^{T\!-\!1}, \label{CDF_sliding} 
		\end{align}
		where $\sum_{k=1}^{N}P(\gamma_{max}^{(t)}=\gamma_{1k}^t)$ for the region $\gamma_{11}^t < \gamma_{12}^t < ... < \gamma_{1N}^t \le x$ is given by 
		\begin{align}
			\sum_{k=1}^{N}P(\gamma_{max}^{(t)} = \gamma_{1k}^t) &= \sum_{k=1}^{N} \bigg(\!\dfrac{M\!-\!N}{M\!-\!k}\!\bigg) (1-e^{-\gamma_{1k}^t/\sigma_a^2})^{M(M-k)} \!-\! \bigg(\!\dfrac{M\!-\!N}{M\!-\!k\!+\!1}\!\bigg) (1-e^{-\gamma_{1k}^t/\sigma_a^2})^{M(M-k+1)}. \label{eq_21}
		\end{align}
		Please refer to Appendix \ref{app7:6} for a detailed derivation of \eqref{CDF_sliding}. Due to the complexity of the above integration, it is not possible to obtain an analytical expression for the CDF of $\gamma^{min}$ in general. However, for a given $M$ this can be solved and the CDF will take the form of 
		\begin{align}
			F_{\gamma^{min}}^{\textrm{sliding}}(x) 
			&\!=\!1\!-\!\biggl(\!1\!-\!\bigg(\!1\!-\! e^{(-2x/\sigma_a^2)}\!\bigg)^{\!M}\biggr)\!\biggl(\!\sum_{q=1}^{\bar{Q}}\!a_q e^{-b_qx/\sigma_a^2}\! \biggr)^{\!T\!-\!1}\!\!,
		\end{align}
		where the coefficients $a_q$ and $b_q$ can be numerically computed for a given $M$ by solving the integration given in \eqref{CDF_sliding}. 
		Next, we use the multinomial expansion \cite{00102} and take the derivative with respect to $x$ to derive the PDF of $\gamma^{min}$ as,
		\begin{align}\label{eq_sliding_pdf}
			f_{\gamma^{min}}^{\textrm{sliding}}(x) &= -\sum_{i=1}^M \sum_{\underset{=T-1} {l_1+\dots+{l_{\bar{Q}}}}} A^{T-1}_{l_1...l_{\bar{Q}}}(i) \biggl(\dfrac{\beta^{T-1}_{l_1...l_{\bar{Q}}}(i)}{\sigma_a^2}\biggr) e^{-\beta^{T-1}_{l_1...l_{\bar{Q}}}(i)x/\sigma_a^2}.
		\end{align}
		As $\log_2(1+x)$ is an increasing function of $x$, the achievable rate can be derived using the variable transformation $t = 1+x$ and \cite[eq. (4.331.2)]{2014520} as, 
		\begin{align}
			R^{\textrm{sliding}}_{\textrm{ind}} &= \dfrac{1}{\log 2}\sum_{i=1}^M \sum_{\underset{=T-1} {l_1+\dots+{l_{\bar{Q}}}}} A^{T-1}_{l_1...l_{\bar{Q}}}(i)\; e^{\beta^{T-1}_{l_1...l_{\bar{Q}}}(i)/\sigma_a^2}Ei\biggl(-\dfrac{\beta^{T-1}_{l_1...l_{\bar{Q}}}(i)}{\sigma_a^2}\biggr), \label{eq_24}
		\end{align}
		where $Ei(.)$ denotes the exponential integral function. This completes the proof of Lemma \ref{theorem_sliding}.
	\end{proof}	
	As such, we can write an approximation on the achievable rate under the sliding window based relay selection using Lemma \ref{theorem_sliding} as,
	\begin{align}\label{eq_sliding_rate}
		R^{\textrm{sliding}} &\approx R^{\textrm{sliding}}_{\textrm{ind}}. 
	\end{align}
	In the following example, we analyze the accuracy of the above approximation.
	
	\vspace{0.25cm}
	\noindent
	\textbf{Example 3}: Consider a single-user, multi-hop relay network where $M=2, L=3,5$. Assuming constant distance in all hops, we normalize the variance of the channel fading coefficient, $\sigma_1^2$, to unity \cite{2969232}. For such a network, the analytical and the simulated achievable rate against the received SNR are plotted in Fig. \ref{figure4} under the sliding window based relay selection strategy with $w=2$. From the figure, it can be observed that the analytical rate expression achieved with \eqref{eq_sliding_rate} is accurate when $L$ is small. While, the approximation becomes lose as we increase $L$, most of the existing relay networks consist of few hops \cite{0286-3,1023886}. As such, \eqref{eq_sliding_rate} can be used to accurately approximate the achievable rate under the sliding window based relay selection.
	\begin{figure}
		\centerline{\includegraphics[width=0.7\textwidth]{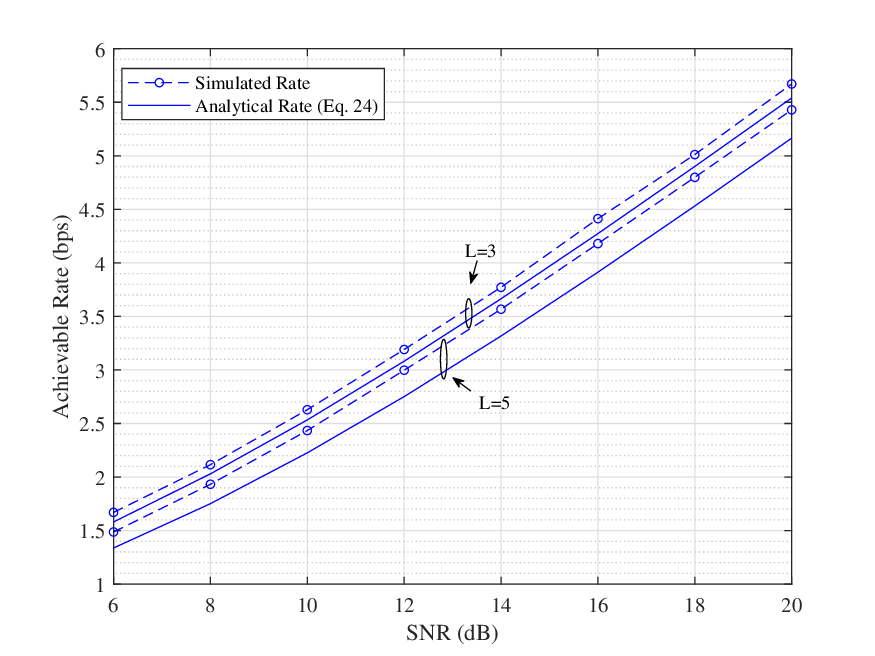}}
		\caption{The achievable rate against received SNR under the sliding window based relay selection when $M=2, w=2$}
		\label{figure4}
	\end{figure}
	
	We note that increasing $w$ improves the performance of sliding window based relay selection. However, for a given $M$ the complexity of sliding window based relay selection exponentially increases proportional to $w$. As such, in the following example, we proceed to analyze the impact of $w$ on the performance of sliding window based relay selection such that we can achieve the best performance with a considerable amount of complexity.
	
	\begin{table}
		\centering
		\caption{Effectiveness of $w$ for a given $M$ when $L=6$}
		\label{table3}
		\begin{tabular}{|p{0.5cm}||p{0.75cm}|p{0.75cm}|p{0.75cm}|p{0.75cm}|p{0.75cm}|p{0.75cm}|p{0.75cm}|p{0.75cm}|p{0.75cm}|p{0.75cm}|p{0.75cm}|p{0.75cm}|}\hline
			$M$ & $w$=1 & $w$=2 & $w$=3 & $w$=4 & $w$=5 & $w$=6 \\ \hline  				
			2 & 80.50 & 96.24 & 99.17 & 99.80 & 99.97 & 100 \\ \hline
			3 & 76.83 & 94.45 & 98.34 & 99.59 & 99.94 & 100 \\ \hline 
			4 & 73.89 & 92.95 & 97.82 & 99.30 & 99.78 & 100 \\ \hline 
			5 & 72.98 & 92.87 & 97.79 & 99.25 & 99.81 & 100 \\ \hline 
			6 & 71.58 & 92.44 & 97.78 & 99.20 &	99.80 & 100 \\ \hline 
			7 & 68.16 & 92.00 & 97.29 & 99.04 &	99.75 & 100 \\ \hline 
		\end{tabular}
	\end{table}
	\vspace{0.25cm}
	\noindent
	\textbf{Example 4}: Consider a single-user, multi-hop relay network where the channels between nodes follow a Rayleigh distribution with zero mean and unit variance. For such a network, the percentage of the achievable rate obtained based on sliding window based relay selection compared to that of optimal relay selection, which we define as the effectiveness of $w$, is given in Table \ref{table3} and Table \ref{table2}. Results are obtained by averaging over $5000$ simulations. 
	From Table \ref{table3}, we observe that as $M$ increases, the percentage of the achievable rate compared to the optimal rate decreases for a given $w$. However, even for a multi-relay network of $M=7$ relays in each hop, $w=3$ manages to result in more than 97\% of the optimal achievable rate. Similarly, from Table \ref{table2}, we observe that as $L$ increases, it is better to increase the window size $w$ as the effectiveness of a given $w$ decreases. However, even for a multi-hop network of $L=12$ hops, $w=3$ manages to result in more than 98\% of the optimal achievable rate.
	As such, $w=2$ provides closer to 90\% of accuracy where as $w=3$ provides closer to 97\% of accuracy compared to the optimal solution. Further, the drop of accuracy with increasing $L$ and $M$ is considerably lower for $w=3$ compared to $w=2$. Therefore, we can conclude that the window size $w=3$ provides an acceptable level of accuracy compared to the optimal solution with a considerable level of complexity.
	\begin{table*}
		\caption{Effectiveness of $w$ for a given $L$ when $M=2$}
		\begin{tabular}{|p{0.5cm}||p{0.75cm}|p{0.75cm}|p{0.75cm}|p{0.75cm}|p{0.75cm}|p{0.75cm}|p{0.75cm}|p{0.75cm}|p{0.75cm}|p{0.75cm}|p{0.75cm}|p{0.75cm}|}\hline
			$L$ & $w$=1 & $w$=2 & $w$=3 & $w$=4 & $w$=5 & $w$=6 & $w$=7 & $w$=8 & $w$=9 & $w$=10 & $w$=11 & $w$=12\\ \hline  				
			3 & 85.81 & 98.20 & 100 & & & & & & & & & \\ \hline
			4 & 85.08 & 96.93 & 99.49 & 100 & & & & & & & &	\\ \hline
			5 & 83.52 & 96.75 & 99.64 & 99.93 & 100	& & & & & & & \\ \hline 
			6 & 81.64 & 95.84 & 99.26 & 99.90 & 99.98 & 100 & & & & & & \\ \hline 
			7 & 79.05 & 94.97 & 98.84 & 99.86 & 99.92 & 99.98 & 100 & & & & & \\ \hline 
			8 & 79.33 & 94.65 & 98.75 & 99.65 & 99.84 & 99.99
			& 100 & 100	& & & &	\\ \hline 
			9 & 76.92 & 94.82 & 99.11 & 99.69 & 99.98 & 99.99 & 100 & 100 & 100 & & & \\ \hline 
			10 & 76.06 & 94.07 & 98.75 & 99.67 & 99.90 & 99.94 & 100 & 100 & 100 & 100 & & \\ \hline
			11 & 75.51 & 93.65 & 98.48 & 99.74 & 99.98 & 99.99
			& 100 & 100 & 100 & 100 & 100 & \\ \hline
			12 & 73.79 & 93.52 & 98.43 & 99.67 & 99.97	& 99.99 & 100 & 100 & 100 & 100 & 100 & 100	\\ \hline 
		\end{tabular}
		\label{table2}
	\end{table*}
	
	\noindent In the following example, we further analyze the achievable rate obtained by the five relay selection strategies and their derived analytical expressions.
	
	\vspace{0.25cm}
	\noindent
	\textbf{Example 5}: Consider a single-user, multi-hop relay network where $M=2, L=4$. Assuming constant distance in all hops, we normalize the variance of the channel fading coefficient, $\sigma_1^2$, to unity \cite{2969232}. For such a network, the analytical and the simulated achievable rate against the received SNR are plotted in Fig. \ref{figure9} under optimal relay selection, hop-by-hop relay selection, ad-hoc relay selection, block-by-block relay selection with block size of two and sliding window based relay selection with window size of two. From the figure, we observe that sliding window based relay selection results in the highest achievable rate compared to other sub-optimal relay selection strategies. We also, observe that the approximate analytical rate in \eqref{optimal_approx2} is higher than the simulated optimal rate while approximate analytical rate in \eqref{eq_sliding_rate} is less than the simulated rate under sliding window based relay selection. As expected, the analytical rate expressions achieved with Theorem \ref{theorem_hop}, Theorem \ref{theorem_adhoc} and Theorem \ref{theorem_block} are similar to the simulated rates under hop-by-hop, ad-hoc and block-by-bloc relay selection strategies, respectively. We also note that the computation time of analytical rates are independent of the received SNR. As such, the average computational time of analytical rates are $0.2492, 0.3896, 0.2302, 0.0944$ and $0.0616$ for optimal relay selection in \eqref{optimal_approx2}, sliding window based relay selection in \eqref{eq_sliding_rate}, block-by-block relay selection in Theorem \ref{theorem_block}, ad-hoc relay selection in Theorem \ref{theorem_adhoc} and hop-by-hop relay selection in Theorem \ref{theorem_hop}, respectively. As such, we can observe that the analytical approximation for sliding window based relay selection has the highest complexity while analytical rate for hop-by-hop relay selection has the lowest complexity when $L=4$. However, we note that for larger $L$ values, the complexity of analytical approximation in \eqref{optimal_approx2} would increase significantly due to consideration of all hops compared to the increment in computation time for other sub-optimal relay selection strategies.
	\begin{figure}
		\centerline{\includegraphics[width=0.7\textwidth]{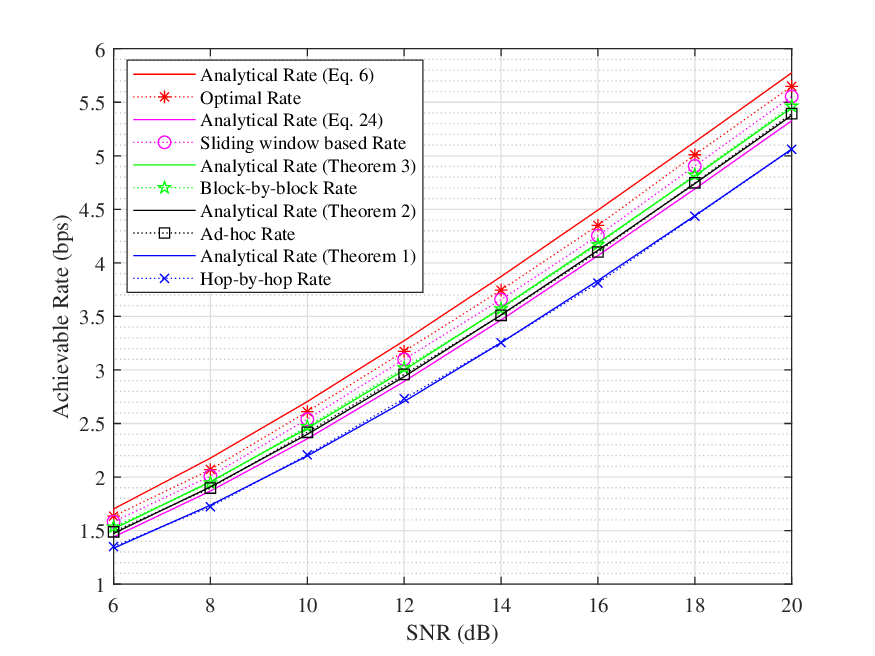}}
		\caption{The achievable rate against received SNR under different relay selection strategies}
		\label{figure9}
	\end{figure}
	
	\section{Multi-User Relay Network} \label{section-multi}
	In this section we extend the achievable rate analysis of the five relay selection strategies adopted in this paper to a multi-user multi-hop relay network and derive approximate expressions for the achievable sum-rate. Due to use of orthogonal transmission among different users and sufficient interference cancellation, we can reasonably assume that the interference caused by other transmitting nodes is negligible compared to the noise power at any receiver node \cite{2809748,2007.647}. As such, we consider a noise limited multi-user network where the relay selection only depends on the received SNR. Furthermore, when the number of available relays is significantly large compared to the number of users, the probability of two users selecting the same relay node in a given hop is small. Therefore, we can consider that the impact of other users on the relay selection of a given user is negligible. As such, for a noise limited multi-hop relay network with $N$ users, the achievable sum-rate can be approximated as, 
	\begin{align} 
		&R^{\textrm{opt}} \approx \dfrac{N}{\log 2}\sum_{q=1}^{Q} a_q \; e^{b_q/\sigma_a^2}\; Ei\biggl(-\dfrac{b_q}{\sigma_a^2}\biggr),  \label{eq_multi_optimal_rate} \\ 
		&R^{\textrm{hop}} \approx -\dfrac{N}{\log 2}\sum_{l_1+...+l_M = L-1} A^{L-1}_{l_1...l_M} e^{\beta^{L-1}_{l_1...l_M}/\sigma_a^2} Ei\biggl(-\dfrac{\beta^{L-1}_{l_1...l_M}}{\sigma_a^2}\biggr), \label{eq_multi_hop_rate} \\ 
		&R^{\textrm{ad-hoc}} \approx -\dfrac{N}{\log 2}\sum_{i=1}^M \sum_{l_1+...+l_M = L-2} A^{L-2}_{l_1...l_M}(i) e^{\beta^{L-2}_{l_1...l_M}(i)/\sigma_a^2} Ei\biggl(-\dfrac{\beta^{L-2}_{l_1...l_M}(i)}{\sigma_a^2}\biggr), \label{eq_multi_ad-hoc_rate} \\ 
		&R^{\textrm{block}} \approx \dfrac{N}{\log 2}\!\sum_{i=1}^M \biggl\{\sum^{T-1}_t \sum_{\underset{=Mt} {l_0+\dots+{l_M}}} A^{Mt}_{l_0\dots l_M}(i,t) e^{\beta^{Mt}_{l_0\dots l_M}(i)/\sigma_a^2} Ei\biggl(\dfrac{-\beta^{Mt}_{l_0\dots l_M}(i)}{\sigma_a^2}\biggr)+ \nonumber \\ & \hspace{275pt}\binom{M}{i} (-1)^{i} e^{2i/\sigma_a^2} Ei\biggl(\dfrac{-2i}{\sigma_a^2}\biggr)
		\biggr\}, \label{eq_multi_block_rate} \\
		&R^{\textrm{sliding}} \approx \dfrac{N}{\log 2}\sum_{i=1}^M \sum_{l_1+...+l_Q = T-1} A^{T-1}_{l_1...l_Q}(i) e^{\beta^{T-1}_{l_1...l_Q}(i)/\sigma_a^2} Ei\biggl(-\dfrac{\beta^{T-1}_{l_1...l_Q}(i)}{\sigma_a^2}\biggr).\label{eq_multi_sliding_rate}
	\end{align}
	by multiplying the single user achievable rate with the number of users. In the following example, we have further illustrated the accuracy of the above approximations.
	
	\vspace{0.25cm}
	\noindent
	\textbf{Example 6}: 
	\begin{figure}
		\centerline{\includegraphics[width=0.7\textwidth]{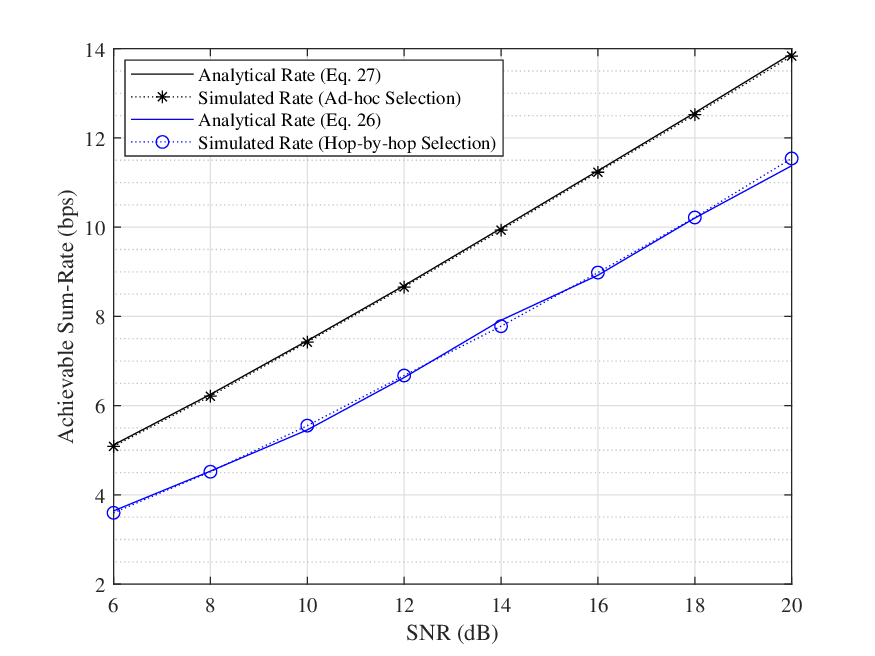}}
		\caption{The achievable sum-rate against the received SNR when $N=2, M=10, L=6$}
		\label{figure753}
	\end{figure}
	Consider a noise limited multi-hop relay network where $N=2, M=10, L=6$. Assuming constant distance in all hops, we normalize the variance of the channel fading coefficient, $\sigma_1^2$, to unity \cite{2969232}. For such a network, the analytical and the simulated achievable sum-rate against the received SNR are plotted in Fig. \ref{figure753} under hop-by-hop and ad-hoc relay selection strategies.
	From the figure, it can be observed that the analytical sum-rates given in \eqref{eq_multi_hop_rate} and \eqref{eq_multi_ad-hoc_rate} are similar to the simulated achievable sum-rate under hop-by-hop and ad-hoc relay selection strategies, respectively. Even though, not shown here due to computational complexity of the multinomial expansion exponents, similar observation can also be expected for block-by-block relay selection strategy. 	
	
	\section{Conclusion} \label{section-conclusion}
	The achievable rate optimization problem is analyzed for a single-user, multi-hop relay network consisting of multiple relay nodes. We consider five relay selection strategies including optimal relay selection which in general has an exponential complexity in respect to the number of hops. We analyzed optimal relay selection using the dynamic programming based Viterbi algorithm and derived an approximation on the optimal achievable rate which is more accurate than the simple approximation based on the independent path assumption. Next, we considered four sub-optimal relay selection strategies, namely hop-by-hop, ad-hoc, block-by-block and sliding window based relay selection. We obtained exact analytical expressions for the achievable rate under the first three relay selection strategies and extended sliding window based relay selection to DF relay network to derive an approximation on the achievable rate with the window size of two. Further, we showed that sliding window based relay selection with a window size of three is sufficient to achieve more than 97\% of the optimal achievable rate. Finally, we extended this analysis to a noise limited multi-user network and showed that when there is a larger number of relay nodes per hop compared to the the number of users, the achievable sum-rate can be approximated by the single user rate times the number of users.  
	
	In the analytical results presented in this paper we assumed that each link in the network follows a Rayleigh fading distribution and same large scale fading. A desirable future extension would be to consider different fading channel distributions such as Nakagami-m, alpha-mu and channels with non-identical large scale fading resulted from various source-relay-destination distances. It would also be interesting to generalize these results to an interference limited multi-user network.
	
	\appendices
	\section{Proof of Lemma 1}\label{app7:2}
	Let us define $\gamma_{ij}=\frac{P}{\sigma^2} |h[j,i,l]|^2$ as the received SNR of the relay link between the transmitting node $i$ and the receiving node $j$. Since, each link in the network follows a Rayleigh distribution with zero mean and variance $\sigma_1^2$, we write the cumulative distribution function (CDF) of $\gamma_{ij}$ as,
	\begin{align}
		F_{\gamma_{ij}}(x) = 1-e^{(-x/\sigma_a^2)}, \; \forall i,j,
	\end{align}
	where $\sigma_a^2=\frac{P\sigma_1^2}{\sigma^2}$. Let $Y_q$ be the end-to-end received SNR of path $q$, which is taken as the minimum SNR over all the hops in path $q$. Therefore, the CDF of $Y_q$ can be written as,
	\begin{align}
		F_{Y_q}(x) = 1-e^{(-L\,x/\sigma_a^2)}.
	\end{align}
	Therefore, $\gamma^{min}$ can be computed as the maximum end-to-end received SNR over all the paths. Under this approximated model, all $Q=M^{L-1}$ paths are independent. As such, we can derive the CDF of $\gamma^{min}$ as,
	\begin{align}
		F_{\gamma^{min}}^{\textrm{opt}}(x) &= \bigg(1-e^{(-L\,x/\sigma_a^2)}\bigg)^Q.
	\end{align}
	Next, we use the binomial expansion and re-write
	\begin{align}
		F_{\gamma^{min}}^{\textrm{opt}}(x) = \sum_{q=0}^Q \binom{Q}{q}(-1)^{q} e^{-L\,q\,x/\sigma_a^2},
	\end{align}
	where $\binom{Q}{q}$ denotes the binomial coefficient. Taking the derivative with respect to $x$, we can derive the PDF of $\gamma^{min}$ as,
	\begin{align}\label{Optimal_PDF}
		f_{\gamma^{min}}^{\textrm{opt}}(x) = \sum_{q=1}^Q \binom{Q}{q}(-1)^{q}\biggl(\dfrac{-Lq}{\sigma_a^2}\biggr) e^{-L\,q\,x/\sigma_a^2}.
	\end{align}
	As $\log_2(1+x)$ is an increasing function of $x$, the achievable rate can be derived using the variable transformation $t = 1+x$ and \cite[eq. (4.331.2)]{2014520} as,
	\begin{align}
		R^{\textrm{opt}}_{\textrm{ind}} &= \int_{0}^{\infty} \log_2(1+x)f_{\gamma^{min}}^{\textrm{opt}}(x) dx \nonumber \\ 
		&{\overset{m}{=}} \! \sum_{q=1}^Q \!\!\binom{Q}{q}\! (-1)^{q}\biggl(\!\dfrac{-Lq}{\sigma_a^2}\!\biggr)\!\dfrac{1}{\log 2} e^{Lq/\sigma_a^2} \!\! \int_{1}^{\infty} \!\!\! \log(t)e^{-Lqt/\sigma_a^2} dt \nonumber \\ 
		&{\overset{n}{=}} \sum_{q=1}^Q \!\!\binom{Q}{q}\! (-1)^{q}\biggl(\!\dfrac{-Lq}{\sigma_a^2}\!\biggr)\!\dfrac{1}{\log 2}\, e^{L\,q/\sigma_a^2} \biggl[-\dfrac{\sigma_a^2}{Lq}Ei\biggl(\!-\dfrac{Lq}{\sigma_a^2}\biggr)\biggr] \nonumber \\ 
		&= \dfrac{1}{\log 2}\sum_{q=1}^Q \binom{Q}{q} (-1)^{q}\,e^{L\,q/\sigma_a^2}\; Ei\biggl(-\dfrac{Lq}{\sigma_a^2}\biggr),
	\end{align}
	where in step $m$, we have substituted \eqref{Optimal_PDF} and used variable transformation $t = 1+x$ and in step $n$, we use the exponential integral function $Ei(.)$ to solve the integral as presented in \cite{2014520}. This completes the proof of Lemma \ref{Lemma1}.
	
	\section{Proof of CDF under Hop-by-Hop Relay Selection}\label{app7:3}
	Let us define $\gamma_{ij}=\frac{P}{\sigma^2} |h[j,i,l]|^2$ as the received SNR of the relay link between the transmitting node $i$ and the receiving node $j$. Since, each link in the network follows a Rayleigh fading distribution with mean zero and variance $\sigma_1^2$, we can write the CDF of $\gamma_{ij}$ as,
	\begin{align}
		F_{\gamma_{ij}}(x) = 1-e^{(-x/\sigma_a^2)}, \; \forall i,j,
	\end{align}
	where $\sigma_a^2=\frac{P\sigma_1^2}{\sigma^2}$. Next, we proceed to analyze the maximum SNR in hop $l$ denoted by $\gamma_{max}^{(l)}$ under hop-by-hop relay selection strategy. 
	If we consider any hop other than the last hop, the forwarding link is selected as the link with the maximum received SNR in that hop. Since, all $M$ links in a given hop are independent from each other, the CDF of the maximum SNR in a given hop $l$ where $l \in \{1,..,L-1\}$ can be derived as,
	\begin{align}
		F_{\gamma_{max}^{(l)}}(x) = \bigg(1-e^{(-x/\sigma_a^2)}\bigg)^M.
	\end{align}
	However, in the last hop there is no choice to make as the destination is fixed. As such, 
	\begin{align} 
		F_{\gamma_{max}^{(L)}}(x) = 1-e^{(-x/\sigma_a^2)}.
	\end{align}
	The end-to-end SNR is taken as the minimum SNR over all $L$ hops. Since all these maximum SNRs are independent of each other, the CDF of $\gamma^{min}$ can be expressed as,
	\begin{align}
		F_{\gamma^{min}}^{\textrm{hop}}(x) &= 1-e^{(-x/\sigma_a^2)}\bigg(1-\bigg(1-e^{(-x/\sigma_a^2)}\bigg)^M\bigg)^{L-1}.
	\end{align}
	
	\section{Proof of CDF under Ad-hoc Relay Selection}\label{app7:4}
	Let us define $\gamma_{ij}=\frac{P}{\sigma^2} |h[j,i,l]|^2$ as the received SNR of the relay link between the transmitting node $i$ and the receiving node $j$. Since, each link in the network follows a Rayleigh fading distribution with mean zero and variance $\sigma_1^2$, we can write the CDF of $\gamma_{ij}$ as,
	\begin{align}
		F_{\gamma_{ij}}(x) = 1-e^{(-x/\sigma_a^2)}, \; \forall i,j,
	\end{align}
	where $\sigma_a^2=\frac{P\sigma_1^2}{\sigma^2}$. Next, we proceed to analyze the end-to-end SNR under ad-hoc relay selection strategy. 
	Under this relay selection strategy, the forwarding link is selected as the link with the maximum received SNR in hop $l$ where $l \in \{1,2,...,L-2\}$. Since each of $M$ links are independent from each other, the CDF of the maximum SNR in a given hop can be derived as,
	\begin{align}\label{eq_hop_76}
		F_{\gamma_{max}^{(l)}}(x) = \bigg(1-e^{(-x/\sigma_a^2)}\bigg)^M,
	\end{align}
	where $\gamma_{max}^{(l)}$ denotes the maximum SNR in hop $l$. However, the last two hops are combined such that the new effective SNR for a given link $j$ in hop $L-1$ is the minimum of two links. Therefore, the CDF of this new effective SNR in hop $L-1$ can be expressed as,
	\begin{align}
		F_{\gamma^{L-1}_{j1}} (x) =1-e^{(-2x/\sigma_a^2)}.
	\end{align}
	Since, the relay in hop $L-1$ is selected such that this new effective SNR is maximized, we can write the CDF of the maximum SNR in hop $L-1$ as, 
	\begin{align}\label{eq_adhoc_1}
		F_{\gamma_{max}^{(L-1)}}(x) &= \bigg(1- e^{(-2x/\sigma_a^2)}\bigg)^M.
	\end{align}
	The end-to-end SNR is taken as the minimum SNR over all $L$ hops. Since all these maximum SNRs are independent of each other, the CDF of $\gamma^{min}$ can be expressed as,
	\begin{align}
		F_{\gamma^{min}}^{\textrm{ad-hoc}}(x) &= 1-\biggl(1-\bigg(1- e^{(-2x/\sigma_a^2)}\bigg)^M\biggr) \bigg(1-\bigg(1-e^{(-x/\sigma_a^2)}\bigg)^M\bigg)^{L-2}.
	\end{align}	
	
	\section{Proof of CDF under Block-by-Block Relay Selection}\label{app7:5}
	Let us define $\gamma_{ij}=\frac{P}{\sigma^2} |h[j,i,l]|^2$ as the received SNR of the relay link between the transmitting node $i$ and the receiving node $j$. Since, each link in the network follows a Rayleigh fading distribution with mean zero and variance $\sigma_1^2$, we can write the CDF of $\gamma_{ij}$ as,
	\begin{align}
		F_{\gamma_{ij}}(x) = 1-e^{(-x/\sigma_a^2)}, \; \forall i,j,
	\end{align}
	where $\sigma_a^2=\frac{P\sigma_1^2}{\sigma^2}$. Next, we proceed to analyze the effective SNR of each block under block-by-block relay selection strategy. Let us divide $L$ hops into blocks of $w$ hops where the total number of blocks are given by $T=\left\lceil \frac{L}{w}\right\rceil$. Since we consider the block size $w=2$, the forwarding link is selected as the link with the maximum received SNR in two consecutive hops such that the signal can end up in any node at the end of the second hop for a given block $t$ where $t \in \{1,2,...,T-1\}$. As such, the maximum minimum SNR across relay $i$ in the first hop of the block can be written as, 
	\begin{align}
		\gamma_i^{(t)} = \textrm{min}\{\gamma_{1,i}^{2t-1},X_i^t\},
	\end{align}
	where $\gamma_{1,i}^{2t-1}$ denotes the SNR between transmitter $1$ and receiver $i$ in hop $2t-1$ and $X_i^t$ denotes the maximum possible SNR in the second hop given relay $i$ in first hop and can be expressed as,
	\begin{align}
		X_i^t =  {\underset{j \in \{1,...,M\}} {\textrm{max} }} \{\gamma_{i,j}^{2t}\}.
	\end{align}
	Since each of these links are independent from each other, the CDF of the maximum minimum SNR across relay $i$ in the first hop of block $t$ can be derived as,
	\begin{align}
		F_{\gamma_{i,max}^{(t)}}(x) &=   1-e^{(-x/\sigma_a^2)}\bigg(1-(1-e^{(-x/\sigma_a^2)})^M\bigg).
	\end{align}
	Then we have $M$ independent paths across $M$ relays in the first hop of the block and as such the effective SNR of a given block $t$ where $t \in \{1,2,...,T-1\}$ can be written as,
	\begin{align}
		F_{\gamma_{max}^{(t)}}(x) &= \bigg(1-e^{(-x/\sigma_a^2)}\bigg(1-(1-e^{(-x/\sigma_a^2)})^M\bigg)\bigg)^M.
	\end{align}
	For the final block, the effective SNR is computed similar to the ad-hoc relay selection approach given in \eqref{eq_adhoc_1} as, 
	\begin{align} 
		F_{\gamma_{max}^{(T)}}(x) 
		&= \bigg(1- e^{(-2x/\sigma_a^2)}\bigg)^M.
	\end{align}
	Then, the end-to-end SNR can be derived as the minimum of these effective SNRs over all the blocks. Since, all $T$ blocks are independent of each other, the CDF of $\gamma^{min}$ can be expressed as,
	\begin{align}
		F_{\gamma^{min}}^{\textrm{block}}(x) &\!=\! 1\!-\!\biggl(1\!-\!\bigg(1\!-\! e^{(-2x/\sigma_a^2)}\bigg)^M\biggr)\bigg(1\!-\!\bigg(1\!-\!e^{(-x/\sigma_a^2)}\bigg(1\!-\!(1\!-\!e^{(-x/\sigma_a^2)})^M\bigg)\bigg)^M\bigg)^{T\!-\!1}.
	\end{align}
	
	\section{Proof of CDF under Sliding Window based Relay Selection}\label{app7:6}
	Let us define $\gamma_{ij}=\frac{P}{\sigma^2} |h[j,i,l]|^2$ as the received SNR of the relay link between the transmitting node $i$ and the receiving node $j$. Since, each link in the network follows a Rayleigh fading distribution with mean zero and variance $\sigma_1^2$, we can write the CDF of $\gamma_{ij}$ as,
	\begin{align}
		F_{\gamma_{ij}}(x) = 1-e^{(-x/\sigma_a^2)}, \; \forall i,j,
	\end{align}
	where $\sigma_a^2=\frac{P\sigma_1^2}{\sigma^2}$. Next, we proceed to analyze the end-to-end SNR under sliding window based relay selection strategy. For a window size of $w$, we have $T=L-w+1$ windows when the window slides by one hop at a time. Therefore, for a given window $t$ where $t \in \{1,...,T-1\}$, we select the best path similar to block-by-block relay selection with $w=2$. Let us define $X_i^t$ to represent the maximum possible SNR across relay $i$ in the first hop within the window $t$ as,
	\begin{align}
		X_i^t =  {\underset{j \in \{1,...,M\}} {\textrm{max} }} \{\gamma_{i,j}^{t+1}\},
	\end{align}
	where $\gamma_{i,j}^{t+1}$ denotes the SNR between transmitter $i$ and receiver $j$ in hop $t+1$. Since, all $M$ links in the second hop are independent with respect to each other, the CDF and the PDF of the random variable $X_i^t$ can be expressed as,
	\begin{align}
		F_{X_i^t} (x) &= (1-e^{-x/\sigma_a^2})^M, \\
		f_{X_i^t} (x) &= M\bigg(\dfrac{1}{\sigma_a^2}\bigg)e^{-x/\sigma_a^2}(1-e^{-x/\sigma_a^2})^{M-1},
	\end{align}
	respectively. Next, let us define the variable $\gamma_{max}^{(t)}$ to represent the SNR between the transmitter and the selected relay of the first hop such that the effective SNR is maximized for the window $t$. Therefore, the conditional CDF of $\gamma_{max}^{(t)}$ can be written as,
	\begin{align}
		F_{\gamma_{max}^{(t)}}(x) &= \sum_{k=1}^{M}I(\gamma_{1k}^t \le x)P(\gamma_{max}^{(t)}=\gamma_{1k}^t) = \left\{
		\begin{array}{ll}
			1 \qquad & \gamma_{1k}^t \le x, \; \forall k \\
			\sum_{k=1}^{N}P(\gamma_{max}^{(t)}\!=\!\gamma_{1k}^t) \qquad & n[\gamma_{1k}^t \le x] \!=\! N, \\ & n[x < \gamma_{1k}^t] \!=\! M\!-\!N\\
			0 & x < \gamma_{1k}^t, \; \forall k,\\
		\end{array}
		\right.
	\end{align}
	where $n[\gamma_{1k}^t \le x]$ represents the number of links which has $\gamma_{1k}^t \le x$. If we consider the scenario $n[\gamma_{1k}^t \le x] = N, n[x < \gamma_{1k}^t] = M-N$, then there are $P^M_N$ possible sub-regions that satisfies this as we are only interested in the $P(\gamma_{max}^{(t)}=\gamma_{1k}^t)$ for $\gamma_{1k}^t \le x$, when $P^M_N$ denotes the number of possible permutations when $N$ units are selected from $M$.
	Therefore, the unconditional CDF of $\gamma_{max}^{(t)}$ can be derived as,  
	\begin{align}
		F_{\gamma_{max}^{(t)}}(x) =& \int_{\gamma_{11}^t =0}^{\infty}\dots\int_{\gamma_{1M}^t =0}^{\infty} P(\gamma_{max}^{(t)} \le x) f_{\gamma_{11}^t}(\gamma_{11}^t)\dots f_{\gamma_{1M}^t}(\gamma_{1M}^t) \, d\gamma_{11}^t \dots d\gamma_{1M}^t  \nonumber \\
		=& (1-e^{-x/\sigma_a^2})^M \!+\!\! \sum_{N=1}^{M-1} P^M_N e^{-(M\!-\!N)x} \bigg(\int_{\gamma_{11}^t =0}^{x} \int_{\gamma_{12}^t =\gamma_{11}^t}^{x}\!\!\dots\int_{\gamma_{1N}^t =\gamma_{1,N-1}^t}^{x} \sum_{k=1}^{N}P(\gamma_{max}^{(t)}\!=\!\gamma_{1k}^t) \nonumber \\ & \hspace{200pt}  e^{-\gamma_{11}^t} \dots e^{-\gamma_{1N}^t} d\gamma_{11}^t \dots d\gamma_{1N}^t \bigg), \label{eq_58}
	\end{align}
	where $\sum_{k=1}^{N}P(\gamma_{max}^{(t)}=\gamma_{1k}^t)$  is given by the following lemma.	
	\begin{lemma}\label{lemma_D1}
		For a network with $M$ independent links and the CDF of any link given as $F_{\gamma_{1k}^t} (x) = 1-e^{-x/\sigma_a^2}$, the probability $\sum_{k=1}^{N}P(\gamma_{max}^{(t)}=\gamma_{1k}^t)$ for the region $\gamma_{11}^t < \gamma_{12}^t < ... < \gamma_{1N}^t \le x < \gamma_{1,N+1}^t,...,\gamma_{1M}^t$ is given by 
		\begin{align*}
			\sum_{k=1}^{N}P(\gamma_{max}^{(t)}\!\!=\!\gamma_{1k}^t) \!=\!\!& \sum_{k=1}^{N} \bigg(\!\dfrac{M\!-\!N}{M\!-\!k}\!\bigg) (1-e^{-\gamma_{1k}^t/\sigma_a^2})^{M(M-k)}  \!-\! \bigg(\!\dfrac{M\!-\!N}{M\!-\!k\!+\!1}\!\bigg) (1-e^{-\gamma_{1k}^t/\sigma_a^2})^{M(M\!-\!k\!+\!1)}.
		\end{align*}
	\end{lemma}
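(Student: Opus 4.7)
The plan is to treat the ordered values $\gamma_{11}^t<\gamma_{12}^t<\cdots<\gamma_{1M}^t$ as fixed parameters and the $X_1^t,\ldots,X_M^t$ as the only remaining random variables, mutually independent with common CDF $F_X(y) = (1-e^{-y/\sigma_a^2})^M$ and PDF $f_X(y)$. Because the selected index is $k^\ast = \arg\max_{k'}\min\{\gamma_{1k'}^t,X_{k'}^t\}$, the target quantity is $\sum_{k=1}^N P(k^\ast = k)$, and I would obtain each $P(k^\ast=k)$ by conditioning on the value of $X_k^t$ and then perform one final outer summation.

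For each $k\in\{1,\ldots,N\}$ I split on whether $X_k^t \ge \gamma_{1k}^t$. In this first case, $Z_k := \min\{\gamma_{1k}^t,X_k^t\} = \gamma_{1k}^t$, every competitor with $k'<k$ automatically satisfies $Z_{k'} \le \gamma_{1k'}^t < \gamma_{1k}^t$, and for $k'>k$ the requirement $Z_{k'} < \gamma_{1k}^t$ reduces to $X_{k'}^t < \gamma_{1k}^t$. Independence of the $X_{k'}^t$ variables then gives the clean contribution $[1-F_X(\gamma_{1k}^t)]\, F_X(\gamma_{1k}^t)^{M-k}$.

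In the second case $X_k^t = y < \gamma_{1k}^t$, so $Z_k = y$ and every competitor must satisfy $Z_{k'} < y$. Those with $k'>k$ have $\gamma_{1k'}^t>y$ and contribute a factor $F_X(y)^{M-k}$; the $k-1$ competitors with $k'<k$ behave differently depending on where $y$ falls among $\gamma_{11}^t,\ldots,\gamma_{1,k-1}^t$. Partitioning $(0,\gamma_{1k}^t)$ into the intervals $(\gamma_{1j}^t,\gamma_{1,j+1}^t)$ for $j=0,\ldots,k-1$ (with the convention $\gamma_{10}^t=0$), on each such piece exactly $j$ of the $k'<k$ competitors satisfy $\gamma_{1k'}^t<y$ automatically while the remaining $k-1-j$ require $X_{k'}^t<y$, so the integrand becomes $F_X(y)^{M-1-j} f_X(y)$. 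The substitution $u=F_X(y)$ integrates each piece in closed form to $[F_X(\gamma_{1,j+1}^t)^{M-j}-F_X(\gamma_{1j}^t)^{M-j}]/(M-j)$.

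The final step is to add the Case A and Case B contributions, sum over $k=1,\ldots,N$, swap the order of summation in the Case B double sum, and reindex by $m=j+1$ to align the summation bounds. The algebra then collapses by a short telescoping: the $F_X(\gamma_{1k}^t)^{M-k}$ pieces from Case A combine with the matching boundary pieces from Case B to leave coefficient $(M-N)/(M-k)$, and analogously the $F_X(\gamma_{1k}^t)^{M-k+1}$ pieces combine to leave coefficient $-(M-N)/(M-k+1)$. Substituting $F_X(\gamma_{1k}^t)=(1-e^{-\gamma_{1k}^t/\sigma_a^2})^M$ then yields the stated identity. I expect the main obstacle to be the bookkeeping in the Case B integral, namely partitioning the domain correctly and tracking which competitors are automatic versus stochastic on each piece; once that is in place, the telescoping is routine.
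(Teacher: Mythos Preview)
Your proposal is correct and follows the same overall architecture as the paper's proof: both split $P(\gamma_{max}^{(t)}=\gamma_{1k}^t)$ into the two cases $X_k^t\ge\gamma_{1k}^t$ and $X_k^t<\gamma_{1k}^t$, and in the second case both partition the integration domain $(0,\gamma_{1k}^t)$ at the points $\gamma_{11}^t,\ldots,\gamma_{1,k-1}^t$ before summing over $k$.

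The execution differs, however, and yours is more elementary. The paper, in each case, further decomposes over all subsets $\mathbf S\subset\{1,\ldots,k-1\}$ according to whether $X_i^t\ge\gamma_{1i}^t$ or $X_l^t<\gamma_{1l}^t$, and then invokes the auxiliary identity $\sum_{\mathbf S}\prod_{i\in\mathbf S}[1-F_X(\gamma_{1i}^t)]\prod_{l\notin\mathbf S}F_X(\gamma_{1l}^t)=1$ to collapse that sum. You bypass this entirely by observing directly that whenever $\gamma_{1k'}^t$ lies below the current threshold, $\min\{\gamma_{1k'}^t,X_{k'}^t\}\le\gamma_{1k'}^t$ already forces the constraint, so no case analysis on $X_{k'}^t$ is needed for those indices. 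This is exactly what the paper's subset identity is accomplishing, but you reach the intermediate expressions $[1-F_X(\gamma_{1k}^t)]F_X(\gamma_{1k}^t)^{M-k}$ and $\frac{1}{M-j}[F_X(\gamma_{1,j+1}^t)^{M-j}-F_X(\gamma_{1j}^t)^{M-j}]$ in one step rather than via a detour. Similarly, the paper finishes by appealing to ``the method of induction,'' whereas your swap of summation order and reindex $m=j+1$ makes the telescoping to the coefficients $(M-N)/(M-k)$ and $-(M-N)/(M-k+1)$ explicit. The paper's route has the minor advantage of displaying the full event decomposition, but your version is shorter and avoids both the auxiliary identity and the induction.
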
 
	
	\begin{proof}	
		We start the proof by noting that the following equality can be easily proved with the method of induction
		\begin{align} \label{eq_sliding_2}
			\sum_{n_s \!=\! 0}^{k\!-\!1}\!\!\! \sum_{{\underset{\mathbf{S} \subset \{1,\dots,K\!-\!1\}} {n[S] = n_s}}}\!\!\! \prod_{i\in\mathbf{S}}^{k\!-\!1}\! [1\!-\!(1\!-\!e^{-\gamma_{1i}^t/\sigma_a^2})^M]\!\prod_{l\not\in\mathbf{S}}^{k\!-\!1}\! (1\!-\!e^{-\gamma_{1l}^t/\sigma_a^2})^M \!\!=\! 1.
		\end{align} 
		Next, we define two new probabilities $P_{1k}^t, P_{2k}^t$ and write
		\begin{align*}
			P(\gamma_{max}^{(t)}\!=\!\gamma_{1k}^t) &\!=\!P(\textrm{min}\{\gamma_{1k}^t,X_k^t\} \! \ge \! \textrm{ min} \{\gamma_{1i}^t,X_i^t\},\; \forall i \neq k) \nonumber \\
			&\!=\! \left\{
			\begin{array}{ll}
				\!\!\! P_{1k}^t \qquad & \textrm{ if } X_k^t \ge \gamma_{1k}^t, X_i^t \ge  \gamma_{1i}^t, X_l^t < \gamma_{1l}^t, \gamma_{1k}^t \ge \gamma_{1i}^t,X_l^t, \; \forall i \in \mathbf{S}, l \not\in \mathbf{S} \cup k\\ \\
				\!\!\! P_{2k}^t \qquad & \textrm{ if } X_k^t < \gamma_{1k}^t, X_i^t \ge  \gamma_{1i}^t, X_l^t <  \gamma_{1l}^t,  X_k^t > \gamma_{1i}^t,X_l^t, \; \forall i \in \mathbf{S}, l \not\in \mathbf{S} \cup k
			\end{array}
			\right.
		\end{align*}
		We have $\gamma_{1i}^t > \gamma_{1k}^t, \; \forall i > k$ in the region $\gamma_{11}^t < ... < \gamma_{1N}^t \le x < \gamma_{1,N+1}^t,...,\gamma_{1M}^t$. Therefore, if there exists $i \in \mathbf{S}$ such that $i >k$ then we have $I(\gamma_{1k}^t \ge \gamma_{1i}^t) = 0$ and $P(\gamma_{1k}^t > X_k^t \ge \gamma_{1i}^t)=0$ which makes $P_{1k}^t, P_{2k}^t = 0$. Therefore, $\mathbf{S} \subset \{1,...,k-1\}$ and the cardinality of set $\mathbf{S}$ given by $n[\mathbf{S}] \le k-1$. Based on that we write $P_{1k}^t$ as, 
		\begin{align}
			P_{1k}^t &= \sum_{n_s = 0}^{k-1}\!\sum_{{\underset{\mathbf{S} \subset \{1,...,K-1\}} {n[S] = n_s}}} P(X_k^t \ge \gamma_{1k}^t)\prod_{i\in\mathbf{S}}^{k-1} I(\gamma_{1k}^t \ge \gamma_{1i}^t) P( X_i^t \ge  \gamma_{1i}^t)\!\prod_{l\not\in\mathbf{S}\cup\{k\}}^{M} P(X_l^t < \min\{\gamma_{1k}^t,\gamma_{1l}^t\}) \nonumber \\ 
			&=\sum_{n_s = 0}^{k-1}\!\!\sum_{{\underset{\mathbf{S} \subset \{1,...,K-1\}} {n[S] = n_s}}}\!\!\!\!\!\![1\!-\!(1\!-\!e^{-\gamma_{1k}^t/\sigma_a^2})^M]\!\prod_{i\in\mathbf{S}}^{k-1} [1\!-\!(1\!-\!e^{-\gamma_{1i}^t/\sigma_a^2})^M]\!\prod_{l\not\in\mathbf{S}}^{k-1} (1\!-\!e^{-\gamma_{1l}^t/\sigma_a^2})^M  \!\!\!\prod_{l=k+1}^{M}\!\! (1\!-\!e^{-\gamma_{1k}^t/\sigma_a^2})^M \nonumber \\ 
			&\!=\![1\!-\!(1\!-\!e^{-\gamma_{1k}^t/\sigma_a^2})^M](1\!-\!e^{-\gamma_{1k}^t/\sigma_a^2})^{M(M\!-\!k)}\!\!\!\sum_{n_s = 0}^{k-1}\!\!\! \sum_{{\underset{\mathbf{S} \subset \{1,...,K-1\}} {n[S] = n_s}}}\!\!\!\prod_{i\in\mathbf{S}}^{k-1}\! [1\!-\!(1\!-\!e^{-\gamma_{1i}^t/\sigma_a^2})^M]\!\prod_{l\not\in\mathbf{S}}^{k-1}\! (1\!-\!e^{-\gamma_{1l}^t/\sigma_a^2})^M \nonumber \\ 
			&\!= [1-(1-e^{-\gamma_{1k}^t/\sigma_a^2})^M] (1-e^{-\gamma_{1k}^t/\sigma_a^2})^{M(M-k)}. \label{eq_60}
		\end{align}
		where the last line follows from \eqref{eq_sliding_2}. 
		
		Similarly, we can write $P_{2k}^t$ as, 
		\begin{align}
			P_{2k}^t &= \sum_{n_s = 0}^{k-1}\!\sum_{{\underset{\mathbf{S} \subset \{1,...,K-1\}} {n[S] = n_s}}}\!\prod_{i\in\mathbf{S}}^{k-1} P(\gamma_{1i}^t \le X_k^t < \gamma_{1k}^t)P(X_i^t \ge \gamma_{1i}^t) \prod_{l\not\in\mathbf{S}\cup\{k\}}^{M}\! P(X_l^t \!<\! \min\{X_k^t,\gamma_{1l}^t\},X_k^t \!<\! \gamma_{1k}^t). \label{eq_61}
		\end{align}
		Let us define $x$ where $x \in X_k^t$ such that $x \le \gamma_{1k}^t$ and write the conditional probability as,
		\begin{align}			
			P_{2k}^t|x &=\sum_{n_s = 0}^{k-1} \sum_{{\underset{\mathbf{S} \subset \{1,...,K-1\}} {n[S] = n_s}}}\prod_{i\in\mathbf{S}}^{k-1} I(\gamma_{1i}^t \le x)P(X_i^t \ge \gamma_{1i}^t) \prod_{l\not\in\mathbf{S}\cup\{k\}}^{M} P(X_l^t < \min\{x,\gamma_{1l}^t\}) . \label{eq_62}
		\end{align}
		Then we integrate $P_{2k}^t|x$ with respect to $x$ to get the unconditional probability given in 
		\begin{align}
			P_{2k}^t &= \int_{0}^{\gamma_{1k}^t} P_{2k}^t|x \; f_{X_k^t} (x) dx = \int_{0}^{\gamma_{11}^t} P_{2k}^t|x \; f_{X_k^t} (x) dx + \sum_{j=2}^{k} \int_{\gamma_{1,j-1}^t}^{\gamma_{1j}^t} \! \! P_{2k}^t|x \; f_{X_k^t} (x) dx  \nonumber \\
			&= \int_{0}^{\gamma_{11}^t} \prod_{l\neq k}^{M} P(X_l^t < x) f_{X_k^t} (x)\; dx \; + \sum_{j=2}^{k} \int_{\gamma_{1,j-1}^t}^{\gamma_{1j}^t}\biggl[ \sum_{n_s = 0}^{k-1} \sum_{{\underset{\mathbf{S} \subset \{1,...,K-1\}} {n[S] = n_s}}}\prod_{i\in\mathbf{S}}^{k-1} I(\gamma_{1i}^t \le x)P(X_i^t \ge \gamma_{1i}^t) \nonumber \\
			& \hspace{200pt} \prod_{l\not\in\mathbf{S}}^{j-1} P(X_l^t < \gamma_{1l}^t) \prod_{l=j,l\not\in\mathbf{S}\cup\{k\}}^{M} P(X_l^t < x) \biggr]f_{X_k^t} (x) dx. \label{eq_63} 
		\end{align}
		Note that when $\gamma_{1,j-1}^t \le x \le \gamma_{1j}^t$, if there exist $i\in\mathbf{S}$ such that $i\ge j$ then $I(\gamma_{1i}^t<x)=0$ thus the term relevant for that realization of $\mathbf{S}$ becomes zero. Therefore, we can consider $\mathbf{S} \subset \{1,...,j-1\}$ with $n[\mathbf{S}]\le j-1$ instead of $\mathbf{S} \subset \{1,...,k-1\}$ with $n[\mathbf{S}]\le k-1$ and simplify $P_{2k}^t$ ,  
		\begin{align}
			P_{2k}^t &= \int_{0}^{\gamma_{11}^t} \dfrac{M}{\sigma_a^2}e^{-x/\sigma_a^2}(1-e^{-x/\sigma_a^2})^{M^2-1}\; dx \; + \sum_{j=2}^{k} \int_{\gamma_{1,j-1}^t}^{\gamma_{1j}^t}  \dfrac{M}{\sigma_a^2}e^{-x/\sigma_a^2}(1-e^{-x/\sigma_a^2})^{M(M-j+1)-1}\; dx  \nonumber \\ & \hspace{150pt} \bigg(\sum_{n_s = 0}^{j-1} \sum_{{\underset{\mathbf{S} \subset \{1,...,j-1\}} {n[S] = n_s}}} \prod_{i\in\mathbf{S}}^{j-1}[1-(1-e^{-\gamma_{1i}^t/\sigma_a^2})^M] \prod_{l\not\in\mathbf{S}}^{j-1}  (1-e^{-\gamma_{1l}^t/\sigma_a^2})^M \bigg) \nonumber \\ 
			&=\! \int_{0}^{\gamma_{11}^t}\! \dfrac{M}{\sigma_a^2}e^{-x/\sigma_a^2}(1\!-\!e^{-x/\sigma_a^2})^{M^2-1}\; dx \!+\! \sum_{j=2}^{k} \int_{\gamma_{1,j-1}^t}^{\gamma_{1j}^t}\!\!\!  \dfrac{M}{\sigma_a^2}e^{-x/\sigma_a^2}(1\!-\!e^{-x/\sigma_a^2})^{M(M-j+1)-1}\; dx  \nonumber \\
			&=\! \dfrac{1}{M}(1\!-\!e^{-\gamma_{11}^t/\sigma_a^2})^{M^2} \!+\! \sum_{j=2}^{k} \dfrac{1}{M\!-\!j\!+\!1}\bigg[ (1\!-\!e^{-\gamma_{1j}^t/\sigma_a^2})^{M(M-j+1)}-(1-e^{-\gamma_{1,j-1}^t/\sigma_a^2})^{M(M-j+1)}\bigg], \label{eq_64} 
		\end{align}
		where the second last line follows from \eqref{eq_sliding_2}. Next, using the method of induction we write 
		\begin{align} 
			\sum_{k=1}^{N} P(\gamma_{max}^{(t)}&=\gamma_{1k}^t) = \sum_{k=1}^{N} P_{1k}^t+P_{2k}^t  \nonumber \\
			& \qquad \quad = \sum_{k=1}^{N}\biggl[ \bigg(\dfrac{M-k}{M-k}(1-e^{-\gamma_{1k}^t/\sigma_a^2})^{M(M-k)}\!-\!\dfrac{M-k}{M-k+1}(1-e^{-\gamma_{1k}^t/\sigma_a^2})^{M(M-k+1)}\bigg) \nonumber \\ & \hspace{50pt} + \sum_{j=1}^{k-1} \dfrac{1}{M-j+1}(1-e^{-\gamma_{1j}^t/\sigma_a^2})^{M(M-j+1)}-\dfrac{1}{M-j}(1-e^{-\gamma_{1j}^t/\sigma_a^2})^{M(M-j)}\biggr]\nonumber \\
			&\qquad \quad = \sum_{k=1}^{N}\!\bigg(\!\dfrac{M\!-\!N}{M\!-\!k}\!\bigg)(1-e^{-\gamma_{1k}^t/\sigma_a^2})^{M(M-k)}\!-\!\bigg(\!\dfrac{M\!-\!N}{M\!-\!k\!+\!1}\!\bigg) (1-e^{-\gamma_{1k}^t/\sigma_a^2})^{M(M-k+1)}. \label{eq_65}
		\end{align}
		which completes the proof of Lemma \ref{lemma_D1}. 
	\end{proof}
	
	For the final window that consists of the last two hops, the relay selection will be fixed for both hops and the effective SNR is computed similar to \eqref{eq_adhoc_1} as,	
	\begin{align} \label{eq_hop_716}
		F_{\gamma_{max}^{(T)}}(x) 
		&= \bigg(1- e^{(-2x/\sigma_a^2)}\bigg)^M.
	\end{align}
	Then the end-to-end SNR can be derived as the minimum of these effective SNRs over all $T$ windows. Under the independent window model, those effective SNRs are independent of each other. Therefore, we have $T$ independent windows. As such,  we derive the CDF of $\gamma^{min}$ as ,  
	\begin{align}
		F_{\gamma^{min}}^{\textrm{sliding}}(x) &= 1-\biggl(1-\bigg(1- e^{(-2x/\sigma_a^2)}\bigg)^M\biggr)\biggl(1-(1-e^{-x/\sigma_a^2})^M - \sum_{N=1}^{M-1} P^M_N e^{-(M-N)x/\sigma_a^2} \nonumber \\ & \;\bigg(\int_{\gamma_{11}^t =0}^{x}\int_{\gamma_{12}^t =\gamma_{11}^t}^{x}...\int_{\gamma_{1N}^t =\gamma_{1,N-1}^t}^{x}\sum_{k=1}^{N}P(\gamma_{max}^{(t)}=\gamma_{1k}^t)\; e^{-\gamma_{11}^t}... e^{-\gamma_{1N}^t} d\gamma_{11}^t ... d\gamma_{1N}^t \bigg)\!\biggr)^{T\!-\!1}. \label{eq_hop_717}
	\end{align}
	where $\sum_{k=1}^{N}P(\gamma_{max}^{(t)}=\gamma_{1k}^t)$ for the region $\gamma_{11}^t < \gamma_{12}^t < ... < \gamma_{1N}^t \le x$ is given by Lemma \ref{lemma_D1}.

\end{document}